\newtheorem{theorem}{Theorem}
\newtheorem{lemma}{Lemma}
\newtheorem{proposition}{Proposition}
\newtheorem{definition}{Definition}
\newtheorem{corol}{Corollary}
\newtheorem{claim}{Claim}
\newtheorem{remark}{Remark}
\def\MIS{MIS}
\def\spec{\infty}
\def\M{M}
\newcommand{\remove}[1]{}
\newcommand{\Active}{{{\sf Influenced}}}
\newcommand{\A}[1]{\Active[#1]}
\newcommand{\N}{{\mathbb{N}}}
\newcommand{\No}{{\mathbb{N}}_0}
\title{ Spread of Influence in Weighted Networks under Time and Budget Constraints \thanks{An extended abstract of a preliminary version of this paper appeared
in Proceedings of 7th International Conference on Fun with Algorithms (FUN 2014), 
Lecture Notes in Computer Science Vol. 8496, A. Ferro, F. Luccio, P. Widmayer (Eds.), pp. 100-112, 2014.
This work was supported in part by the Slovenian Research Agency (research program P$1$-$0285$ and research projects J$1$-$5433$, J$1$-$6720$, and J$1$-$6743$).
}
}
\author[1]{Ferdinando Cicalese}
\author[2]{Gennaro Cordasco}
\author[1]{Luisa Gargano}
\author[3]{Martin Milani\v{c}}
\author[4]{Joseph Peters} 
\author[1]{U. Vaccaro}
\affil[1]{\footnotesize
 Department of Computer Science, University of Salerno, Italy, {\texttt{\{cicalese,lg,uv\}@dia.unisa.it}}}
\affil[2]{Department of Psychology, Second University of Naples, Italy, {\texttt {gennaro.cordasco@unina2.it}}}
\affil[3]{University of Primorska, UP IAM and UP FAMNIT, 
SI 6000 Koper, Slovenia,  \texttt{martin.milanic@upr.si}}
\affil[4]{School of Computing Science, Simon Fraser University, Canada, \texttt{peters@cs.sfu.ca}}
\begin{document}
\maketitle

\begin{abstract}
Given a 
network represented by a weighted directed graph $G$, 
we consider the problem of finding a bounded cost
set of nodes $S$ such that  the influence spreading from  $S$ in $G$,
within a {given} time bound, 
 is as large as possible. 
The dynamic that governs the spread of influence is the following:
 initially only  elements in $S$ are influenced;
subsequently at each round, the set of influenced  elements is
augmented by all nodes in the network that have a sufficiently large number of
already influenced  neighbors.
We  prove that the problem is NP-hard, even in
simple networks like complete graphs and trees. 
We also derive a series of positive  results.
We  present exact  pseudo-polynomial time algorithms for  general {trees},
that become polynomial time  in case the trees are unweighted. This last result improves on
previously published results. We also design  polynomial time algorithms
for general weighted paths and cycles, and for unweighted complete graphs.
\end{abstract}

 \textbf{Keyword.} Social Networks, Spread of Influence, Viral Marketing, Dynamic Monopolies 


\section{Introduction}
\subsection{Motivation} Social influence is the process by which individuals adjust their opinions, 
revise their beliefs, or change their behaviors as a
result of  interactions with other people.
When exposed to  the opinions of peers on a
given issue,  people tend to filter and integrate the 
information they receive and adapt  their own judgements  accordingly (see for instance 
\cite{Yaniv++}). This human tendency to harmonize their own ideas and customs with the 
opinions and behaviors of 
others  \cite{Asch56} may 
occurs for several reasons: 
a) the basic human need to be liked and accepted  by others \cite{Baum+}; 
b) the belief that others, especially a majority group, 
have more accurate and trustworthy information than the individual \cite{S}; 
c) the ``direct-benefit'' effect,  implying that
an individual obtains an  explicit benefit when 
he/she aligns his/her behavior with the behavior of others (e.g., \cite{EK}, Ch.~17).
It has not escaped the attention of advertisers\footnote{and politicians too \cite{Bo+,LKLG,T,RW}} 
that the natural  human tendency to conform
can be exploited in \emph{viral marketing} \cite{LAM}.
Viral marketing 
refers to the spread of information about products and  behaviors,
  and their adoption by people. According to Lately~\cite{D}, ``\emph{the traditional broadcast model of 
advertising-one-way, one-to-many, read-only is increasingly being 
superseded by a vision of marketing that wants, and expects, consumers to spread the word themselves}''.
For what strictly concerns us,  the intent of maximizing the spread of viral information across a network naturally
suggests many interesting  optimization  problems. Some of them  were first articulated in the seminal papers
\cite{KKT-03,KKT-05}, under various adoption paradigms.
The recent monograph \cite{CLC} contains an excellent  
description of the area.
In the next section, we will explain and motivate our model of 
information diffusion, state the problem that we are investigating,
describe our results, and discuss how
they relate to the existing literature.

\newcommand{\nin}[1]{N^{in}(#1)}
\subsection{The Model}
Let $G = (V,E)$ be a directed  graph, $c: V\to \N=\{1, 2, \ldots \}$ be a function assigning  
costs to vertices and   $w: E \to \No=\{0, 1, 2, \ldots \}$ 
be a function assigning weights to edges.
The value $c(v)$ of each vertex $v\in V$ is a measure of how much
it costs to initially convince the member $v$ of the network
to endorse a given product/behaviour.
The weight of an arc $e = (u,v)\in E$, denoted either by $w(e)$ or by $w(u,v)$,  represents the amount of 
influence that  node $u$ exercises on node $v$.  Let $t: V \to \No $ be a
function assigning  thresholds to the vertices of $G$. For each node $v\in V$, the threshold 
value $t(v)$  quantifies  how hard it is to influence  node $v$, in the sense that
 easy-to-influence  elements of the network  have ``low'' $t(\cdot)$ values, and
hard-to-influence  elements have  ``high'' $t(\cdot)$ values \cite{Gr}.

A  process of \emph{influence diffusion} in $G$,  starting at the subset of nodes $S \subseteq V$ (hereafter called 
\emph{target set}), 
is a sequence  of vertex subsets
$$\Active[S,0] \subseteq \Active[S,1] \subseteq \ldots\subseteq \Active[S,\tau] \subseteq \ldots \subseteq V,$$
where
$\Active[S,0] = S$, and such that for all $\tau > 0$,

$$\Active[S,\tau] = \Active[S,\tau{-}1]\cup \Big\{u \,:\, \sum_{v \in \nin{u} \cap 
\Active[S,\tau{-}1] }\!\!\!\!\!\!\!\!\!\!\!\! w(v,u) \geq t(u) \Big\}.$$
Here $\nin{u}=\{v: (v,u)\in E\}$ denotes the set of incoming neighbors of $u$, that is, the set
of nodes in $G$ having a directed arc towards $u$. 
In  words, at each round  $\tau$ a node $u$ becomes influenced 
if  the sum of the influences exercised on $u$ by $u$'s already influenced incoming  neighbors meets or exceeds
 $u$'s threshold $t(u)$.
We  say   that node $u$ is influenced {\em within} round $\tau$ if $u \in  \Active[S,\tau]$;
 $u$ is influenced {\em at} round $\tau>0$ if $u \in  \Active[S,\tau]\setminus \Active[S,\tau-1]$.
 
 \smallskip
The   problem that we introduce and study in this paper  is defined as follows:

\medskip

\noindent
{\sc $(\lambda, \beta)$-Maximally Influencing Set ($(\lambda, \beta)$-MIS)}.\\
{\bf Instance:} A  directed graph $G=(V,E)$, node thresholds $t:V\to \No$, 
vertex costs $c: V\to \N$,
edge influences $w: E \to \No$,  a latency bound $\lambda\in \N$ and a budget $\beta \in \N$.\\
{\bf Objective:} Find a set $S\subseteq V$  such that $c(S)=\sum_{v\in S}c(v)\leq \beta$ and $|\Active[S,\lambda]|$ 
is as large as possible.

\medskip
Notice that the assumption that all vertex costs are positive is without loss of generality. 
{Indeed, if $c(v) = 0$ for some vertex $v$ in the graph, then we can consider a new graph $G'$ obtained from $G$ by eliminating $v$ and by setting} 

{$t'(u) = \begin{cases}{\max\{t(u)-w(v,u),0\}}&{\mbox{if $u$ is an out-neighbor of $v$ in $G$}}\\{t(u)}&{\mbox{otherwise.}}\end{cases}$}

\noindent
{The decrease in the threshold  of the neighbors of $v$ implies that  $\Active[S,\tau]$   in $G'$ 
is equal to $\Active[S\cup \{v\},\tau]$ in $G$, for each $S\subseteq V-\{v\}$ and  $\tau\geq 1$; hence 
$S$ is an optimal solution for $G'$
iff  $S\cup \{v\}$ is an optimal solution for the original instance. }
\remove{Indeed, if $c(v) = 0$ for some vertex $v$ in the graph, then we can transform the problem instance into an equivalent one as follows: set $c(v) = \beta + 1$, set 
$t(v) =  \sum_{u: (u,v)\in E}w(u,v)+1$, and 
for every out-neighbor $u$ of $v$ set $t(u) = \max\{t(u)-w(v,u),0\}$.
In the transformed instance, vertex $v$ will never be chosen to be in an optimal  target set 
due to its high cost, and it will never be influenced due to its high threshold. 
It follows that if $S$ is an optimal solution to the transformed instance, 
then $S\cup \{v\}$ is an optimal solution to the original instance. }
The above transformation can be carried out for all 
vertices of zero cost in time $O(|V|+|E|)$ resulting in an equivalent instance in which all vertex costs are positive.
\medskip

We are  also marginally  interested in the case in which the influence of   each arc and the cost to
initially activate each vertex are  unitary 
(i.e., the  network is unweighted),  and the graph representing the network is symmetric, that is,
$(u,v)\in E$ if and only if $(v,u)\in E$. In this particular scenario, 
studied in the   conference  version  of this paper \cite{CCGMPV14}, the activation process obeys the following simpler rule:
$\Active[S,0] = S$, and  for all $\tau > 0$,
$$\Active[S,\tau] = \Active[S,\tau-1]\cup \Big\{u : \big|N(u)\cap \Active[S,\tau - 1]\big|\ge t(u)\Big\},\,$$
and   the question  is to find a set of vertices $S$ such that $|S|\leq \beta$ and $|\Active[S,\lambda]|$ is as large as possible,
where $\lambda$ is given as input to the problem.

\subsection{Related work}

The above algorithmic problems have roots in the general study
of the \emph{spread of influence} in Social Networks (see  \cite{CLC,EK} and references quoted therein).
For instance, in the area of viral marketing \cite{DNT12,DR-01}, companies  wanting to
promote products or behaviors might  initially  try to target and convince
a few individuals who, by word-of-mouth, can  trigger
a  cascade of influence in the network leading to
an  adoption  of the products by  a much larger number of individuals.

It is clear that the $(\lambda, \beta)$-MIS problem represents
an abstraction of the viral marketing scenario if one makes the reasonable assumption  that an individual
decides to adopt the products if a suitable number of {his/her friends} have adopted
the products. Analogously, the  $(\lambda, \beta)$-MIS problem can describe  other
diffusion problems  arising in sociological, economical,  and biological  networks
(again see   \cite{EK}).
Therefore,  it comes as no surprise that
{special  cases of our} problem (or  variants thereof) have recently
attracted the  attention of the algorithmic community.
We shall limit ourselves here
to discussing the work that is most directly related to ours, and refer the reader to the 
 monographs \cite{CLC,EK} for an excellent overview of the area.
We just mention that our results also seem to be  relevant to other  areas,
like dynamic monopolies \cite{FKRRS-2003,Peleg-02} for instance.

The first authors to study problems of the spread of influence in networks
from an algorithmic point of view were Kempe \emph{et al.} \cite{KKT-03,KKT-05}.
However, they were mostly interested in networks with  randomly chosen thresholds.
Chen \cite{Chen-09} studied the following minimization problem:
given an unweighted  graph $G$ and fixed thresholds $t(v)$, for each vertex $v$ in $G$,
find
a  set of minimum size that eventually influences
all (or a fixed fraction of) the nodes of $G$.
He proved  a  strong inapproximability result that makes unlikely the existence
of an  algorithm with  approximation factor better than  $O(2^{\log^{1-\epsilon }|V|})$.
Chen's result stimulated a series of papers  
\cite{ABW-10,BCNS,BHLM-11,Centeno12,Chiang,Chopin-12,Chun,Chun2,C-OFKR,Ga+,Re,Za} that isolated interesting cases 
in which the problem (and variants thereof) become tractable.

None of the above quoted  papers considered 
the {number of rounds} necessary for the spread of influence in the network, 
the fact that different individuals can exercise different amounts of influence 
on the same person,  or that the cost to initially 
convince  individuals might vary among different members of the network.
However, all of these  questions   correspond to   relevant issues.
Regarding the first question, it is well known that in viral marketing 
it is quite important to spread information quickly.
Indeed, research in Behavioural Economics 
shows that humans make decisions mostly on the basis of very recent events, 
even though they might remember much more \cite{Alba,Chen+}.
Moreover, 
the  conventional idea of long-living viral spread has been challenged by empirical evidence
in several real-life datasets, where it has been found that the  
processes of influence diffusion do not extend after the first  few
initial steps \cite{GWG,WP}.
Therefore, it seems  reasonable to study processes of information diffusion that 
reach the desired goals within a fixed time bound.
Concerning the second point,   it is generally assumed that the   influence that a VIP 
may have  on the behaviour of an individual can be much larger than the amount of influence
exercised on the  same person by a less famous acquaintance, and this phenomenon should be taken into account when
designing effective viral marketing campaigns (e.g., see 
\cite{I+, Ne+}).\footnote{Startups like Klout (\texttt{http://klout.com})  offer a way to quantify 
 the influence of online users of social media. }
 Finally, that different members of the network have different
activation costs (see \cite{Ba+}, for example) is justified by
the reasonable assumption that celebrities or public figures can  charge more for their
endorsements of  products.

The only paper known to us that has studied the spread of influence  with constraints on the number of rounds 
in which the process must be completed (but in unweighted networks and with no costs on vertices)
 is \cite{CCGMV13}. How our results are related to \cite{CCGMV13}
 will be explained   in the next section. 
Paper \cite{Rautenbach2013} studied the problem of finding the smallest
set of vertices that can influence a whole  graph (again, in unweighted networks and with no costs on vertices), where
each vertex has an associated deadline that must be respected by the diffusion 
process.
Finally, we
 point out that Chen's   inapproximability result \cite{Chen-09}
still holds  if the diffusion  process must end in a
bounded number of rounds.

\subsection{The Results}
{In light of Chen's strong inapproximability results \cite{Chen-09}, we feel
 motivated to identify special cases for which our general problems become tractable (i.e., tree, cycle, and clique topologies).
We also feel  that the analyzed networks might approximate some features of real-life networks; for instance, trees emulate hierarchical structure while cliques resemble strongly connected components like communities. Moreover, we believe/hope that our proposed strategies could be useful for the development of novel strategies or heuristics on more elaborate topologies.}

Our first result shows that the {\sc $(\lambda, \beta)$-MIS} problem cannot be 
solved in polynomial time on weighted complete graphs unless $P=NP$. 
On the other hand,  if the graph is complete and 
unweighted, then a  linear time algorithm for the 
{\sc $(\lambda, \beta)$-MIS} problem is quite easy to find. 

In Section \ref{sec-trees} we turn our attention to trees.
We first prove that solving the $(\lambda,\beta)$-MIS problem on  weighted trees
is at least as hard as solving general instances of the well-known NP-hard 
{\sc $0-1$ Knapsack} problem. 
Subsequently, we derive pseudo-polynomial  time algorithms to 
solve the $(\lambda,\beta)$-MIS problem on  weighted trees.
We point out that the paper \cite{CCGMV13} provided 
an algorithmic framework to solve the $(\lambda,\beta)$-MIS problem (and related
ones), in \emph{unweighted} graphs of bounded clique-width. When instantiated on unweighted trees,
the approach of \cite{CCGMV13}  gives algorithms for the $(\lambda,\beta)$-MIS problem
with complexity that is
\emph{exponential} in the parameter $\lambda$, whereas our algorithm, when instantiated on unweighted trees, 
has complexity polynomial in all of the relevant parameters (see Corollary \ref{cor-tree}).

In  Section \ref{path-cycles}, we study the case of  weighted 
paths and cycles and we provide polynomial time algorithms to solve the $(\lambda,\beta)$-MIS problem
on these classes of graphs.

We conclude this discussion  by remarking 
 that, in the very special  case $\lambda=1$, thresholds and costs $t(v)=c(v)=1$ for each vertex $v\in V$,
and edge weights $w(e)=1$ for each $e\in E$,
problems of influence diffusion reduce to well-known domination problems in graphs (and variants thereof).
In particular, when $\lambda=1$,
  $t(v)=c(v)=1$ for each $v\in V$, and $w(e)=1$ for  $e\in E$, 
our $(\lambda,\beta)$-\textsc{MIS} problem reduces to the \textsc{Maximum Coverage}
problem considered in \cite{BGHHJK}.
Therefore, our results can also be seen as far-reaching generalizations of \cite{BGHHJK}.

\section{Complexity of Computing $(\lambda,\beta)$-MIS  in  Complete Graphs }\label{sec:complete}
We prove that the  {\sc $(\lambda, \beta)$-MIS} problem is NP-hard for complete graphs.
{It   was shown  in \cite{Dreyer09} that when $t(v)=d(v)$ for each vertex $v$, where $d(v)$ denotes the in-degree of $v$,
the problem of finding the \emph{minimum} size subset
$S\subseteq V$ such that $\Active[S, \tau]=V$, for some $\tau\geq 0$, is  equivalent to finding 
a minimum size vertex cover of the graph.
Indeed  under the hypothesis that  $t(v)=d(v)$ for each $v\in V$, one has that 
$\Active[S, \tau]=\Active[S, 1]$ for any  $S\subseteq V$ and $\tau>0$; moreover, 
 $\Active[S, 1]=V$ if and only if $S$ is a vertex cover for $G$.}
This observation was used to prove that, for any constant $k\geq 3$,  the above   minimization problem 
cannot be solved in polynomial  time, unless $P=NP$,
in the class of $k$-regular non-bipartite unweighted graphs.
Now, consider the following problem:
\begin{quote}
{\sc $\lambda$-Minimum Size Subset ($\lambda$-MSS)}.\\
{\bf Instance:} A   graph $G=(V,E)$,  thresholds $t:V\to \No$, 
and a 
bound $\lambda\in \N$. 
\\
{\bf Objective:} Find a set $S\subseteq V$ of minimum size such that $\Active[S, \lambda]=V$.
\end{quote} 
Under the assumption that $t(v)=d(v)$ for each $v\in V$,  
a  {minimum} size subset
$S\subseteq V$ such that $\Active[S, \lambda]=V$ (where now $\lambda$ is an input to the problem) would still
correspond to a minimum vertex cover of the graph. Hence, 
The {\sc $\lambda$-MSS} problem cannot be solved in polynomial time unless P=NP.

\begin{theorem}\label{theorem-cg}
The {\sc $(\lambda, \beta)$-MIS} problem cannot be 
solved in polynomial time on weighted complete graphs unless P=NP, even if all vertex costs 
$c(v)$ are equal to 1.
\end{theorem}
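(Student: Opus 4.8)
The plan is to reduce the (already established) NP-hard $\lambda$-MSS problem on $k$-regular non-bipartite unweighted graphs to the $(\lambda,\beta)$-MIS problem on weighted complete graphs, exploiting the fact that zero-weight arcs can be used to ``hide'' the missing edges of a complete graph. Concretely, given a $k$-regular non-bipartite graph $H=(V,E_H)$ with thresholds $t(v)=d(v)=k$ --- the instance for which finding a minimum set $S$ with $\Active[S,\lambda]=V$ is equivalent to minimum vertex cover, and hence NP-hard --- I would build the complete graph $G$ on the same vertex set $V$, setting $w(u,v)=1$ whenever $\{u,v\}\in E_H$ and $w(u,v)=0$ otherwise, keeping $t(v)=k$, assigning unit costs $c(v)=1$, and fixing $\lambda=1$.

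The first step is to verify that the diffusion process on $G$ coincides exactly with the one on $H$. Since only the $H$-arcs carry positive weight, for every $u$ the quantity $\sum_{v\in \nin{u}\cap \Active[S,\tau-1]} w(v,u)$ equals the number of already-influenced $H$-neighbors of $u$; this reaches the threshold $k=t(u)$ precisely when all $k$ of $u$'s $H$-neighbors are influenced. Consequently $\Active[S,\tau]$ in $G$ is identical to the corresponding set in $H$, and in particular (as already noted above for the $t(v)=d(v)$ regime) $\Active[S,1]=V$ holds if and only if $S$ is a vertex cover of $H$, while the process is stationary after the first round.

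The second step is the standard max-to-min argument. Because all costs equal $1$, we have $c(S)=|S|$, so a budget $\beta$ simply bounds the cardinality of the target set. Suppose, for contradiction, that $(\lambda,\beta)$-MIS could be solved in polynomial time on weighted complete graphs. Running such an algorithm on $G$ with $\lambda=1$ for each budget $\beta=1,2,\ldots,|V|$, I would record the maximum number $f(\beta)$ of influenced vertices; the smallest $\beta^\star$ with $f(\beta^\star)=|V|$ is exactly the minimum size of a vertex cover of $H$. Since only $|V|$ calls are needed, this would yield a polynomial-time algorithm for minimum vertex cover on $k$-regular non-bipartite graphs, forcing P=NP.

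The main point to be careful about is the faithfulness of the embedding: one must check that introducing zero-weight arcs (turning $H$ into a complete --- and, for the symmetric setting, symmetric --- graph) creates no spurious influence, which is immediate here since zero-weight arcs contribute nothing to any threshold sum. The remaining points --- that $\lambda$ may be fixed to $1$ because the process stabilizes immediately under $t(v)=d(v)$, and that iterating over the at most $|V|$ possible budget values keeps the reduction polynomial --- are routine.
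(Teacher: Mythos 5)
Your proposal is correct, and its overall architecture is the same as the paper's: embed a hard instance (arising from the $t(v)=d(v)$/vertex-cover regime) into a complete graph on the same vertex set so that the influence dynamics are preserved verbatim, then use a hypothetical polynomial-time $(\lambda,\beta)$-MIS oracle with varying budgets to extract the minimum target-set size. The difference lies in the weight gadget. You null out the non-edges of $H$ with weight-$0$ arcs, which is legitimate under the paper's definition ($w: E \to \No$, so zero weights are allowed), and it makes the faithfulness of the embedding immediate; but it also makes the ``completeness'' of the constructed graph purely formal, and the argument would collapse if one insisted on strictly positive arc weights. The paper instead assigns weight $n+1$ to the arcs of $E$ and weight $1$ to all remaining arcs, and scales every threshold to $t'(v)=(n+1)t(v)$; since the at most $n-1$ unit-weight arcs entering any vertex can contribute at most $n-1 < n+1$, they can never tip a threshold, so the dynamics again coincide with those of $G$ --- but now the hardness holds even on complete graphs in which every arc exerts positive influence, a strictly more robust statement. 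Two further minor differences are immaterial: you fix $\lambda=1$ and sweep the budget linearly ($n$ oracle calls), while the paper keeps $\lambda$ general and uses binary search ($\log n$ calls); both are polynomial. If you want your proof to match the strength of the paper's construction, replace your weights $1/0$ by $(n+1)/1$ and scale the thresholds accordingly --- the rest of your argument goes through unchanged.
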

\begin{proof}
{
We will prove that if one had a polynomial time 
algorithm to solve the {\sc $(\lambda, \beta)$-MIS} problem  on an arbitrary complete weighted 
graph, then one could also obtain a polynomial time algorithm for the {\sc $\lambda$-MSS} problem.}

{Consider an arbitrary graph $G=(V,E)$ with the thresholds on the nodes given by 
some function $t:V\to \No$. Let $n$ denote the size of $V$. We  construct a complete graph 
$K=(V,F)$ on the same set of vertices $V$, with
weight function on the edges given by
$$ \mbox{for all } (u,v)\in F \quad  w(u,v)= \begin{cases} n +1 \mbox{ \ \ \ if $\{u,v\}  \in E$} \\
                                                1 \mbox{ \ \ \ \ \ \ \ \ \ otherwise,}
																								\end{cases}$$
and for each node $v \in V$, the threshold $t'(v)$ of $v$ in $K$ equal to 
$$t'(v)= (n+1) t(v).$$
One can easily check that any set of initially influenced nodes $S\subseteq V$ generates
the \emph{same} dynamics of influenced nodes in $G$ and $K$, 
that is, for each $\tau\geq 0$ we have that $\Active[S,\tau]$ in $G$ is 
equal  to $\Active[S,\tau]$ in $K$.
{The conclusion of the proof is now clear: if one had a polynomial time algorithm ${\cal A}$  for the 
{\sc $(\lambda, \beta)$-MIS} problem  on arbitrary complete weighted 
graphs, then by using at most $\log |V| $ calls to ${\cal A}$ on the graph $K$,
 one could find in polynomial time a minimum 
size subset $S\subseteq V$ such that $\Active[S, \lambda]=V$ in the graph $G$.} 
This, together with  the hardness of the {\sc $\lambda$-MSS} problem, 
completes  the proof.}
\end{proof}

We now turn our attention to positive results,  restricting our attention to 
complete graphs in which all edge weights are equal. Without loss of generality, we can assume that 
all edge weights are equal to 1.
Since complete graphs are of clique-width at most~$2$, results from \cite{CCGMV13} imply that the
$(\lambda,\beta)$-MIS problem is solvable in polynomial time on such a class of graphs, if $\lambda$ is constant.
Indeed, one can see that  the $(\lambda,\beta)$-\textsc{MIS}  
can be solved in linear time, independently 
of the value of $\lambda$, by using ideas from  \cite{Nichterlein-12}.

If the network is a  complete graph, then  for any subset of vertices $S$ and any round $\tau \geq 1$,
it holds that
$$\Active[S, \tau] = \Active[S, \tau-1] \cup \{v \,:\,t(v) \leq |\Active[S, \tau-1]|\}.$$
Since $\Active[S, \tau - 1] \subseteq \Active[S, \tau],$ we have
\begin{equation} \label{eq:clique-dynamics}
\Active[S, \tau] = S \cup \{v \,:\,t(v) \leq |\Active[S, \tau-1]|\}.
\end{equation}

From (\ref{eq:clique-dynamics}), and by using a standard exchange
argument, one  realizes  that a set $S$
with largest influence is the one containing the nodes with highest
thresholds. Since it is customary in the case of unweighted graphs to make the reasonable
assumption that $t(v)\in \{0, 1, \ldots, n\}$,
the selection of the $\beta$ nodes with highest threshold 
can be done in linear time. Summarizing, we have the following result.

\begin{theorem} \label{theorem:basic}
There exists an  optimal solution $S$
to the $(\lambda, \beta)$-\textsc{MIS}  problem on a complete unweighted graph $G=(V, E)$ that consists
 of the
$\beta$ nodes of $V$ with highest thresholds, and this solution  can be computed in linear time.
\end{theorem}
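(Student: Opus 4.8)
The plan is to establish the structural claim by an exchange argument and then handle the running time separately. Throughout I write $a_\tau = |\Active[S,\tau]|$ and exploit the fact that (\ref{eq:clique-dynamics}) reduces the dynamics to cardinalities: since $\Active[S,\tau] = S \cup \{v : t(v) \leq a_{\tau-1}\}$ and the set adjoined to $S$ is determined by the single number $a_{\tau-1}$, one obtains, whenever $|S| = \beta$,
$$a_\tau = \beta + |\{v \notin S : t(v) \leq a_{\tau-1}\}|.$$
The first reduction is to note the obvious monotonicity $\Active[S,\tau] \subseteq \Active[T,\tau]$ for $S \subseteq T$ (an immediate induction on $\tau$), which lets me assume $|S| = \beta$ exactly, since padding a smaller target set never decreases the final influence; the trivial case $\beta \geq n$, where $S = V$ is optimal, is disposed of at once.

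The core of the argument is a single swap. Suppose $S$ is an optimal set of size $\beta$ that is not of the claimed form; then there are $u \in S$ and $v \notin S$ with $t(v) \geq t(u)$. Set $S' = (S \setminus \{u\}) \cup \{v\}$ and write $a'_\tau = |\Active[S',\tau]|$. I would prove $a'_\tau \geq a_\tau$ for every $\tau$ by induction, the base case being $a'_0 = \beta = a_0$. For the inductive step I decompose the two complements as $V \setminus S = C \cup \{v\}$ and $V \setminus S' = C \cup \{u\}$, where $C = V \setminus (S \cup \{v\})$ is common to both, turning the recurrence into
$$a_\tau = \beta + |\{w \in C : t(w) \leq a_{\tau-1}\}| + [\,t(v) \leq a_{\tau-1}\,],$$
together with the analogous expression for $a'_\tau$ involving $u$ and $a'_{\tau-1}$. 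The inductive hypothesis $a'_{\tau-1} \geq a_{\tau-1}$ makes the $C$-term for $S'$ at least that for $S$, and combining $t(u) \leq t(v)$ with $a_{\tau-1} \leq a'_{\tau-1}$ shows the indicator term for $S'$ dominates the one for $S$; summing yields $a'_\tau \geq a_\tau$. Iterating this swap transforms any optimum into the set of the $\beta$ highest-threshold vertices without ever decreasing the influence, which gives the existence claim.

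For the running time I would observe that, since thresholds lie in $\{0, 1, \ldots, n\}$, the vertices can be bucketed by threshold value in $O(n)$ time and the $\beta$ largest read off directly, so the claimed solution is produced in linear time. The step I expect to demand the most care is the inductive swap: one must accommodate ties among thresholds and, more subtly, the ``self-protection'' effect whereby a vertex placed in the target set is counted regardless of whether its own threshold is met. This is precisely what the decomposition into the common complement $C$ plus the single exchanged vertex is designed to isolate, and keeping the indicator comparison honest — comparing $t(v)$ against $a_{\tau-1}$ and $t(u)$ against $a'_{\tau-1}$, rather than conflating the two rounds — is where a careless version of the argument would break down.
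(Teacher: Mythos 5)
Your proposal is correct and follows essentially the same route as the paper: the paper's proof consists precisely of the cardinality-based dynamics in equation (\ref{eq:clique-dynamics}), a ``standard exchange argument'' (which it does not spell out), and the observation that thresholds in $\{0,1,\ldots,n\}$ allow linear-time selection of the $\beta$ highest-threshold nodes. Your inductive swap with the common complement $C$ is simply the fully worked-out version of the exchange argument the paper invokes, so there is nothing to reconcile between the two.
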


\section{Complexity of Computing $(\lambda,\beta)$-MIS in  Weighted Trees}\label{sec-trees}

We first show that the $(\lambda,\beta)$-MIS problem on
weighted trees is at least as hard as the well-known 
$0-1$ Knapsack problem, which is defined as follows: 

\medskip
\noindent {\sc $0-1$ Knapsack}.\\
{\bf Instance:} $n$ items, $o_1, o_2, \ldots, o_n$,  where each 
$o_i$ has a profit $p_i$ and weight $w_i,$  a knapsack capacity $W,$
and a  profit bound $P$. \\
{\bf Question:} Does there exist a subset of
items $\{o_{i_1},o_{i_2},\ldots,o_{i_k}\}$,  such that $\sum_{j=1}^{k}
w_{i_j} \leq W$ and $\sum_{j=1}^{k} p_{i_j} \geq P$?



\begin{figure}[!hbp]
\centering
\includegraphics[width=0.7\textwidth]{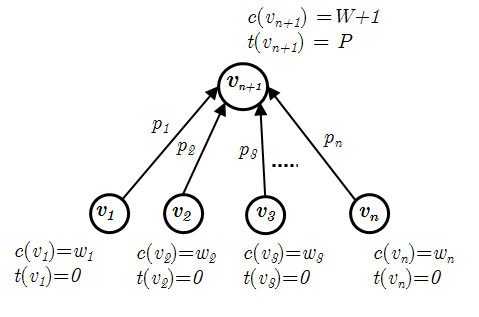}
\caption{The weighted tree $T$.\label{figure:hn}}
\end{figure}

\begin{theorem}\label{theorem-h}
The {\sc $(\lambda, \beta)$-MIS} problem cannot be
solved in polynomial time on weighted star graphs unless P=NP.
\end{theorem}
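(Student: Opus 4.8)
The plan is to reduce from \textsc{$0$-$1$ Knapsack}, using the budget $\beta$ of the $(\lambda,\beta)$-MIS instance to encode the knapsack capacity $W$ and the size of the influenced set to encode the total profit. Given a knapsack instance with items $o_1,\dots,o_n$, weights $w_i$, profits $p_i$, capacity $W$ and profit bound $P$, I would build a weighted star $K$ (whose underlying undirected graph is a star with $2n$ leaves) as follows. The center is a single vertex $r$. For each item $o_i$ I add an \emph{item leaf} $v_i$ with a single arc $(v_i,r)$ of weight $w(v_i,r)=p_i$, cost $c(v_i)=w_i$, and threshold $t(v_i)=1$; since $v_i$ has no incoming arc, it can enter the influenced set only by being placed in the target set $S$. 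I set $t(r)=P$. Finally I attach $M=n$ \emph{reward leaves} $u_1,\dots,u_M$, each with a single incoming arc $(r,u_j)$ of weight $w(r,u_j)=1$ and threshold $t(u_j)=1$, so that every reward leaf becomes influenced exactly one round after $r$ does. To force the optimizer to act only through the item leaves, I make $r$ and every $u_j$ unaffordable by setting $c(r)=c(u_j)=\beta+1$. I take $\beta=W$ and $\lambda=2$.

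Next I would analyze the two-round dynamics. Because $r$ and the reward leaves each cost more than $\beta$, no feasible target set can contain them, so $S$ is a set of item leaves with $c(S)=\sum_{v_i\in S}w_i\le W$. The center $r$ has only the item leaves as incoming neighbours, hence $r\in\Active[S,1]$ if and only if $\sum_{v_i\in S}p_i\ge P$; and in that case all $M$ reward leaves belong to $\Active[S,2]$. If $r$ is never influenced, then $\Active[S,2]$ consists only of the chosen item leaves and therefore has size at most $n$. This yields the separation I want: a feasible $S$ with $|\Active[S,2]|\ge n+1$ exists if and only if $r$, and hence every reward leaf, can be influenced.

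The equivalence then follows in both directions. If the knapsack instance has a solution $I$ with $\sum_{i\in I}w_i\le W$ and $\sum_{i\in I}p_i\ge P$, then $S=\{v_i: i\in I\}$ is feasible, activates $r$ at round $1$ and all $M=n$ reward leaves at round $2$, so $|\Active[S,2]|\ge n+1$. Conversely, any feasible $S$ with $|\Active[S,2]|\ge n+1$ must, by the size bound, have $r$ influenced, which forces $\sum_{v_i\in S}p_i\ge P$ while $c(S)=\sum_{v_i\in S}w_i\le W$; thus $\{i:v_i\in S\}$ solves the knapsack instance. Since $K$ is built in polynomial time, a polynomial-time algorithm for $(\lambda,\beta)$-MIS on weighted stars would let us decide, by comparing the maximum value of $|\Active[S,2]|$ with $n+1$, whether the knapsack instance is solvable, contradicting its NP-hardness unless $P=NP$.

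The part that needs the most care is the gadget bookkeeping rather than any deep argument: I must ensure that the reward leaves can be influenced \emph{only} through $r$ (guaranteed since $r$ is their unique in-neighbour), that $r$ can be influenced \emph{only} by the item leaves placed in $S$ (guaranteed since the item leaves are $r$'s only in-neighbours and $r$ is too expensive to buy), and that the reward count $M=n$ is large enough to dominate the at most $n$ item leaves yet small enough to keep the instance polynomial. Once these three points are pinned down, the cutoff $n+1$ on $|\Active[S,2]|$ separates ``knapsack solvable'' from ``not solvable'' exactly, and the latency $\lambda=2$ is precisely what is required for the two-hop propagation from the item leaves, through $r$, to the reward leaves.
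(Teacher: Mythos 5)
Your reduction is correct and is essentially the paper's own: both encode \textsc{$0$-$1$ Knapsack} on a star whose center has threshold $P$ and is priced out of the budget, with item leaves of cost $w_i$ sending arcs of weight $p_i$ into the center, so that a feasible target set influencing the center corresponds exactly to a feasible knapsack selection of profit at least $P$. The paper's version is leaner --- it sets the item leaves' thresholds to $0$ (so all $n$ leaves are always influenced), takes $\lambda=1$, and separates \textsc{Yes} from \textsc{No} instances by whether the center can be added as the $(n+1)$-st influenced node --- which makes your reward-leaf amplification gadget and the choice $\lambda=2$ unnecessary, though harmless.
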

\begin{proof}
Our reduction will be from the {\sc $0-1$ Knapsack} problem.
Starting from an instance of the {\sc $0-1$ Knapsack} problem, we  build a
weighted tree $T=(V,E)$  as depicted in  Figure
\ref{figure:hn}. The tree $T$ consists of  $n+1$ nodes, one node $v_i$
for each item $o_i$ plus an additional  node $v_{n+1}$.
For each $i=1,2, \ldots, n,$ the node $v_i$ has a directed edge to node
$v_{n+1}$ with weight  $w(v_i,v_{n+1})=p_i$. For each
$i=1,2, \ldots, n,$ the threshold of node $v_i$ is $t(v_i)=0$, 
and the cost of node  $v_i$ is $c(v_i)=w_i$, while 
 $t(v_{n+1})=P$ and $c(v_{n+1})=W+1$. 
It is easy to see that $T$ has a target set $S\subseteq V$ of total
cost at most $W$ such that  $\Active[S,1]=V$ if and only if the instance of the 
{\sc $0-1$ Knapsack} problem has a \textsc{Yes} answer, from which the theorem 
easily follows.


Let $S=\{v_{i_1},v_{i_2},\ldots,v_{i_k}\}\subseteq V$ be a target set
for $T$ such that $\sum_{j=1}^{k} c(v_{i_j}) \leq W$ and
$\Active[S,1]=V.$
Since $c(v_{n+1})=W+1$ we have that $v_{n+1}\notin S$.
The inequality $\sum_{j=1}^{k} c(v_{i_j}) \leq W$ implies that
$\sum_{j=1}^{k} w_{i_j} \leq W$.
The hypothesis that $\Active[S,1]=V$ implies  that $v_{n+1}  \in \Active[S,1]$, that is, 
  $\sum_{j=1}^k w(v_{i_j},v_{n+1}) \geq t(v_{n+1})=P$.  Consequently  $\sum_{j=1}^{k}
p_{i_j} \geq P$.

Conversely, let $K=\{o_{i_1},o_{i_2},\ldots,o_{i_k}\}$ be a subset
of items such that  $\sum_{j=1}^{k} w_{i_j} \leq W$ and
$\sum_{j=1}^{k} p_{i_j} \geq P$. 
Let $S=\{v_{i_1}, \ldots v_{i_k}\}$. We have that $c(S)\leq W$.
Since for each $i=1,2, \ldots, n,$ it holds that $t(v_i)=0$, we also have  
$\{v_1,v_2,\ldots,v_n\} \subseteq \Active[S,1]$.
Moreover, the hypothesis that $\sum_{j=1}^{k} p_{i_j} \geq P$ directly implies
that $\sum_{j=1}^{k} w(v_{i_j}, v_{n+1})=\sum_{j=1}^{k} p_{i_j}\geq P$, 
 consequently the nodes
in $S$ are able to influence the  node $v_{n+1}$ in one step, that is, $\Active[S,1]=V$.
\end{proof}

In the rest of this section we derive a pseudo-polynomial  time 
algorithm for the {\sc $(\lambda, \beta)$-MIS} problem on 
weighted trees.
Let $T = (V,E)$ be a tree having $n$ nodes.
Let us denote by $\Delta$ the maximum indegree of $T$, that is, the quantity
$$\Delta=\max_{v\in V}|\{u: (u,v)\in E\}|$$  
and by   $W$ the quantity	 
$$W=\max_{v\in V}\left\{\sum_{u \in \nin{v}} w(u,v)\right\}.$$
In the following, we will assume that $T$ is rooted at some node $r$. 
For any node $v$ in this rooted tree, we  denote the  subtree rooted at $v$ by $T(v)$,
the set of children of $v$ by $C(v)$, and the parent of $v$ in $T$, for $v \neq r,$ by $p(v)$.
We will develop a dynamic programming algorithm 
that will
 prove the following theorem.

\begin{theorem}\label{theorem-tree}
The {\sc $(\lambda, \beta)$-MIS} problem can be
solved in  time $O(\Delta\, \lambda^2W \beta^3)$ on  a weighted tree with maximum in-degree $\Delta$ and total edge weight $W$.
\end{theorem}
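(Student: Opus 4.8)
The plan is to root $T$ at an arbitrary node $r$ and compute, by a bottom-up dynamic program over the subtrees $T(v)$, a table recording the best achievable influence inside $T(v)$ as a function of the budget spent there and of the round at which $v$ itself becomes influenced. The single edge joining $v$ to its parent $p(v)$ is the only channel through which $T(v)$ interacts with the rest of $T$; once $v$'s activation round is fixed, the dynamics inside $T(v)$ are completely determined by the subtree together with that round, so it suffices to summarize the interaction by the round $\tau$ at which $v$ is activated and by whether the arc $(p(v),v)$ is needed to achieve it.

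Concretely, for every node $v$, every round $\tau\in\{0,1,\dots,\lambda,\spec\}$ (with $\tau=0$ meaning $v\in S$ and $\tau=\spec$ meaning $v\notin\Active[S,\lambda]$) and every budget $b\in\{0,\dots,\beta\}$, I would store the maximum size of $\Active[S,\lambda]\cap T(v)$ over all target sets $S$ with $c(S\cap T(v))=b$ for which $v$ is first influenced at round $\tau$. Because $v$ begins to contribute its weight $w(v,c)$ to a child $c$ only from round $\tau+1$ on, while $c$ contributes $w(c,v)$ to $v$ only when $c$ is active by round $\tau-1$, the mutual dependence between $v$ and each child is confined to this pair of rounds. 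I would therefore keep two variants of the table: one in which the children already drive $v$ above its threshold $t(v)$ by round $\tau$ (``parent not needed''), and one in which reaching $t(v)$ at round $\tau$ also requires the contribution $w(p(v),v)$ of the parent (``parent needed''). This is the device that breaks the apparent circularity: in a tree every timing dependency is local to one edge and can be resolved once the rounds of its two endpoints are fixed.

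The base case is a leaf $v$, which is active at round $0$ at cost $c(v)$, is never activated by its empty subtree at any $\tau\ge1$, and may be activated by parent help recorded in the ``parent needed'' table; its stored value is $1$ iff $v\in\Active[S,\lambda]$. For an internal node $v$ with a fixed candidate round $\tau$, I would process its children $C(v)=\{c_1,\dots,c_d\}$ one at a time, maintaining a partial table indexed by the budget consumed so far and by the total incoming weight $x\le W$ accumulated at $v$ from the children already committed to be active by round $\tau-1$. Merging a child is a knapsack-style convolution: for each choice of the child's activation round $\tau_c$ and budget I look up the appropriate child table (its ``parent needed'' variant being admissible exactly when the arc $(v,c)$ exists and $\tau<\tau_c$, so that $w(v,c)$ may count toward $c$), add $w(c,v)$ to $x$ iff the child is active by round $\tau-1$, add the child's influence count to the value, and sum the budgets. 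After all children are merged I finalize $v$'s two tables by comparing $x$ with $t(v)$: if $x\ge t(v)$ the round $\tau$ is justified without the parent, otherwise it goes into ``parent needed'' provided $x+w(p(v),v)\ge t(v)$, while $v\in S$ fills the $\tau=0$ slot directly. Since $r$ has no parent, the answer is $\max\{\,\M[r][\tau][b]:\tau\in\{0,\dots,\lambda,\spec\},\ b\le\beta\,\}$ over the ``parent not needed'' table.

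The main obstacle is exactly this two-way timing interaction across each parent--child edge, together with the bookkeeping ensuring that $v$ is first influenced at round $\tau$ and not earlier; the two-variant table, plus the fact that each such interaction touches a single tree edge, is what makes the scheme correct and free of circular definitions. Granting this, the running time follows by accounting for the merges: the two round parameters (that of $v$ and that of the child currently merged) give a factor $\lambda^2$, the accumulated-weight index gives a factor $W$, the budget convolutions contribute the powers of $\beta$, and each node has at most $\Delta$ children; summing the merge costs over all nodes and charging the weight index carefully yields the stated bound $O(\Delta\,\lambda^2 W\beta^3)$.
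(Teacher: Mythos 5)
You are following essentially the same route as the paper: a bottom-up dynamic program whose states are (node, budget, activation round $\tau\in\{0,1,\ldots,\lambda,\infty\}$, parent-help flag), knapsack-style budget splits over the children, an accumulated incoming-weight index bounded by $W$, and the same $O(\Delta\,\lambda^2W\beta^3)$ accounting. Your two table variants ``parent not needed''/``parent needed'' are exactly the paper's device of evaluating each node at its original threshold $t(v)$ and at the residual threshold $t'(v)=\max\{t(v)-w(p(v),v),0\}$, and your admissibility rule ``parent needed only when $\tau<\tau_c$'' mirrors the restriction $\tau_i\in\{\tau+1,\ldots,\lambda\}$ in the paper's Lemma~\ref{lemma:claim1}.

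However, there is a genuine gap, and it sits precisely at the point you yourself call ``the main obstacle'' and then settle by assertion. You never specify from which round the parent's weight $w(v,c)$ is assumed to be available inside a child's ``parent needed'' table, and no choice makes your stated invariant (``maximum of $|\A{S,\lambda}\cap T(v)|$ over target sets for which $v$ is first influenced at round $\tau$'') come out both sound and complete without further argument. Take $t(c)=2$, $w(v,c)=2$, a single grandchild $g$ with $w(g,c)=1$ and $g$ in the target set, and let $v$ become active at round $\tau=3$. In the true process $c$ is first influenced at round $4$. If the child's entry is computed with the parent's help available throughout (i.e., with threshold $t'(c)=0$), then $c$ fires at round $1$, so the entry indexed by ``first influenced exactly at round $4$'' is infeasible ($-\infty$) and the merge at $v$ misses a valid solution: incompleteness. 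If instead the entry only requires the threshold condition at round $3$ (``influenced within round $4$''), then, hanging a long directed path below $c$, the entry's value counts path vertices that are reachable within $\lambda$ rounds only when $c$ fires at round $1$, which cannot happen when the help really arrives at round $4$, so the merge over-counts: unsoundness. The paper escapes exactly this with the strict-inequality filter in Lemma~\ref{lemma:claim1}: an entry $\MIS[v_i,j,\tau_i,t'(v_i)]$ with $\tau_i>\tau$ is admitted only if it strictly exceeds $\MIS[v_i,j,\tau,t'(v_i)]$, which certifies an achieving target set in which $v_i$ genuinely fires after round $\tau$. Your scheme can be repaired without that filter, by defining the entries purely operationally through your recurrence (each child committed to an explicit round) and then proving soundness via a monotonicity argument (help arriving earlier than committed never decreases the final count) and completeness by indexing an optimal solution with its actual activation rounds; this argument is also what would excuse your never enforcing the ``not influenced before $\tau$'' half of ``first influenced at round $\tau$'', which your merge does not check. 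But neither the filter nor this argument appears in the sketch, and without one of them the recurrence as written is not justified.
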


{ The rest of this section is devoted to the design and analysis of the
  algorithm that proves  Theorem \ref{theorem-tree}.
The algorithm   traverses the input  tree $T$  bottom up, in such a way that each node is considered
after all of its children have been processed. 
The basic idea is that the nodes in one subtree of a given node $v$ cannot influence nodes in another subtree without passing through $v$. Moreover, considering a node $v$ and one of its children $u$, there are three possibilities: $v$ influences $u$ (in this case $v$ must be influenced before $u$); $u$ influences $v$ (in this case $u$ must be influenced before $v$); they do not influence each other (the nodes in $T(u)$ cannot influence any other node in $T \setminus T(u)$).
Two particular cases will be considered: 
\begin{itemize}
	\item $v$ belongs to the initial target set $S$. In this case all of the children of $v$ can exploit the influence of $v$ starting in round $1$;
\item $v \notin \A{S, \lambda}$. 
\end{itemize}
In both of these particular  cases, the nodes that belong to different subtrees of $v$ cannot influence each other.
In light of the above observations, for each node $v$, the algorithm solves all possible
{$(\tau, \beta)$-MIS} problems on $T(v)$ for all possible values of $\tau\leq \lambda$ and $\beta\leq b$. Moreover, for some of these values,
we will consider not only the original threshold $t(v)$ of $v$, but also
the decreased value 
\begin{equation}\label{eq:decreasedth}
t'(v)=
\left\{
  \begin{array}{ll}
    \max \{t(v)-w(p(v),v), 0\} & \hbox{if $v\neq r$} \\
    t(v) & \hbox{if $v = r$}
  \end{array}
\right.
\end{equation}
 which we will refer to as the {\em residual threshold}.
The original threshold is used when the nodes in the subtree $T(v)$ are not influenced by $p(v)$ and consequently by any other nodes in $T\setminus T(v)$. The residual threshold is used when $p(v)$ influences $v$. In this case the strategy must guarantee that $p(v)$ will be influenced before $v$.
}

In the following,  we assume without loss of generality that $$0\leq t(u)\leq W(u) + 1,$$
where $W(u)=\sum_{v \in \nin{u}} w(v,u),$ holds for all nodes $u\in V$ (otherwise, we can set
$t(u)=W(u)+ 1$ for every node $u$ with threshold exceeding $W(u) + 1$ without
changing the problem).
\begin{definition} \label{defi:MIS}
For each node $v\in V$,  integers $b =  0, 1, \dots, \beta$,
$t\in \{t'(v),t(v) \}$, and $\tau \in\{0,1,\ldots,\lambda,\spec\}$, let us
denote by $\MIS[v,b,\tau,t]$   the maximum number of nodes that can be influenced in $T(v)$,
in at most $\lambda$ rounds, starting with a target set $S \subseteq V(T(v))$, assuming that

\begin{itemize}
\item the target set is of total cost at most $b$, that is, $c(S) \leq b$;
\item the threshold of $v$ is $t$, and for every $u\in V(T(v))\setminus\{v\}$, the threshold of $u$ is $t(u)$;
\item the parameter $\tau$ is such that
\begin{align}
 & 1) \mbox{ if $\tau=0$ then $v$ must belong to the target set,}\label{eq-case1}\\
\nonumber & 2) \mbox{ if $1\leq\tau\leq \lambda$ then $v$ is not in the target set and the influence of $v$'s } \\
\nonumber & \ \ \ \mbox{children at round  $\tau-1$ is sufficiently large to activate $v$ at round $\tau$,}\\
& \ \ \ \mbox{that is $\sum_{u \in C(v) \cap \A{S,\tau-1}} w(u,v) \geq t$; } \label{eq-case2}\\
 & 3) \mbox{ if $\tau=\infty$  then $v$ is not influenced within round $\lambda$.}\label{eq-case3}
\end{align}
\end{itemize}

We define $\MIS[v,b,\tau,t]= -\infty$ when
the above problem is infeasible.
For instance, if $\tau=0$ and $b<c(v)$  we have
$\MIS[v,b,0,t]= -\infty$.

Denote  by $S(v,b,\tau,t)$ any target set $S \subseteq V(T(v))$ attaining the value $\MIS[v,b,\tau,t]$ (in case of feasible instances).
\end{definition}
We notice that in the above definition,
 if $1\leq \tau\leq \lambda$, then    the assumption that $v$ has threshold $t$ implies that
 $v$ is influenced by round $\tau$
and it is  able to start influencing its neighbors no later than at round $\tau +1$.\footnote{Notice that this does not exclude the case that
$v$ becomes an influenced node at some round before $\tau' < \tau$.}
The value
$\tau=\spec$ means that
$v$ could be either influenced after round $\lambda$ or not influenced at all.
\begin{remark}
It is worthwhile mentioning that  $\MIS[v,b,\tau,t]$ is monotonically non-decreasing in $b$  and non-increasing in $t$.
However, $\MIS[v,b,\tau,t]$ is not necessarily  monotone in $\tau$.
\end{remark}
{Indeed, partition the set $C(v)$ into two sets: $C'(v)$, which contains the $t$ children that influence $v$, and $C''(v)$,  which contains the remaining $|C(v)|-t$ children that may be influenced by $v$. 
A small value of $\tau$ may require a higher budget on subtrees rooted at a node $u \in C'(v)$, and may save some budget on the remaining subtrees; the opposite happens for a large value of $\tau$. 
An example is depicted in Figure \ref{figure:mis}. In the example, all of the node costs $c(\cdot)$ and edge weights $w(\cdot)$ are equal to $1$.  The table reports the value of $\MIS[v,b,\tau,1]$ for each $b\in \{0,1\}$ and $\tau\in\{0,1,2,\spec\}$. }

\begin{figure}[!hbp]
\centering
\includegraphics[width=0.7\textwidth]{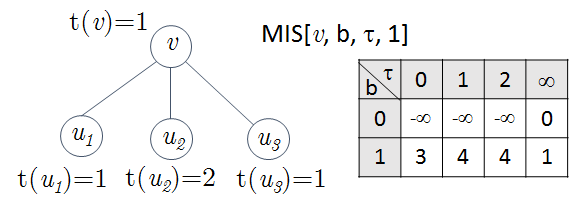}
\caption{A tree $T(v)$ (left) and the value of $\MIS[v,b,\tau,1]$ for each $b\in \{0,1\}$ and $\tau\in\{0,1,2,\spec\}$.\label{figure:mis}}
\end{figure}

The maximum number of nodes in $T$ that can be influenced within round $\lambda$ with
any (initial) target set of cost at most $\beta$
can then be obtained by computing
\begin{equation}\label{eq-mas}
\max_{\tau \in \{0,1,\ldots,\lambda,\spec\}} \ \MIS[r,\beta,\tau,t(r)].
\end{equation}
{We compute this quantity
in Lemma~\ref{lemma:MIS-ABC} by decomposing
$$\max_{\tau \in \{0,1,\ldots,\lambda,\spec\}} \ \MIS[v,b,\tau,t],$$
for 
each $v \in V,$  each $b=0,1,\ldots,\beta$, 
and each $t \in \{t'(v),t(v)\}$, 
into a maximum of three other values which  are successively and separately computed  in Lemmata~\ref{lemma1}--\ref{lemma3}.}

%


We  proceed in a  bottom-up fashion on the tree, so that the computation of the various values $\MIS[v,b,\tau,t]$ for a node $v$ is done after all of
the values for $v$'s children are known.

For each leaf node $\ell$ we have
\begin{equation}\label{eq-casel}
\MIS[\ell,b,\tau,t] =  \begin{cases} 1 & \mbox{ if } (\tau=0 \mbox{ AND } b\geq c(\ell)) \mbox{ OR } (t=0 \mbox{ AND }1\leq \tau \leq \lambda )
\\
0 & \mbox{ if } \tau=\infty
\\
-\infty &  \mbox{otherwise.} \end{cases}
\end{equation}
Indeed, a leaf $\ell$ gets influenced, in the one-node subtree $T(\ell)$, only when either  $\ell$ belongs to the target set ($\tau=0$) and the budget is  sufficiently large ($b\geq c(\ell)$)  or  the threshold is zero (either $t=t(\ell)=0$ or $t=t'(\ell)=0$) independently of the number of rounds.


For any internal node $v$, we show how to compute each  value $\MIS[v,b,\tau,t]$   in time $O(d(v)W(v)\lambda\beta^2)$, where $d(v)$ denotes the in-degree of $v$.


It will be convenient  to analyze the behavior of $MIS[v, b, \tau, t]$ by dividing the possible values of $\tau$ into three cases, according to whether
$\tau = 0$, $\tau \in \{1, \dots, \lambda\}$, or $\tau = \infty.$

To this aim, we will now define three functions, which will be useful for the analysis and the computation of $\MIS[\cdot,\cdot,\cdot,\cdot]$.

In the following we shall also assume that an order has been fixed on the children of any node $v$, that is,  if $v$ has $d$ children we
denote them as $v_1, v_2, \dots, v_d,$ according to the fixed order. Also, we define $F(v, i)$ to be the forest consisting of the subtrees rooted at the
first $i$ children of $v,$ i.e., $F(v, i) = T(v_1) \cup \cdots \cup T(v_i).$ We will also use $F(v,i)$ to denote the set of vertices it includes.

\begin{definition} \label{def:Amax}
Let $v$ be a vertex with $d$ children. For $i=1,\ldots, d$ and $j=0,\ldots, \beta-1$, let $A_v[i,j]$ be the
maximum number of nodes that can be influenced, within $\lambda$ rounds,
in $F(v,i)$ by an influence diffusion process in $T(v)$,
assuming that the target set contains  $v$ and a subset of nodes of $F(v,i)$ of total cost at most $j$.
\end{definition}


\begin{proposition} \label{prop:Amax}
For each vertex $v$ with $d$ children, each $b=0,1,\ldots, \beta$, and each
$t\in \{t(v),t'(v)\}$
, it holds that
\begin{equation}\label{eq-Amax}
\MIS[v,b,0,t] = \begin{cases} 1+A_v[d,b-c(v)] & \mbox { if } b\ge c(v)\\ -\infty & otherwise. \end{cases}
\end{equation}
\end{proposition}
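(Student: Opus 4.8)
The plan is to argue directly from Definitions~\ref{defi:MIS} and~\ref{def:Amax}, exploiting the fact that the constraint $\tau=0$ forces $v$ into the target set and thereby renders the threshold $t$ of $v$ irrelevant. No exchange or monotonicity argument is needed; the identity is essentially definitional.

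First I would dispose of the infeasible case. When $\tau=0$, Definition~\ref{defi:MIS} requires $v\in S$, so every feasible target set has cost at least $c(v)$. Hence if $b<c(v)$ there is no admissible $S$ with $c(S)\le b$, and by the stated convention $\MIS[v,b,0,t]=-\infty$, matching the second line of~\eqref{eq-Amax} (this is exactly the example noted immediately after Definition~\ref{defi:MIS}).

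For the main case $b\ge c(v)$, I would set up a correspondence between the admissible target sets of the two quantities. Since $v$ must lie in $S$, write $S=\{v\}\cup S'$ with $S'=S\setminus\{v\}\subseteq V(T(v))\setminus\{v\}=F(v,d)$ and $c(S')=c(S)-c(v)\le b-c(v)$; conversely any $S'\subseteq F(v,d)$ with $c(S')\le b-c(v)$ yields an admissible $S=\{v\}\cup S'$ of cost at most $b$. Because $v\in S$, vertex $v$ is influenced already at round $0$ regardless of its threshold, so the value of $t$ only governs \emph{when $v$ would be influenced by its children} — a question that is moot here. Consequently the full diffusion dynamics in $T(v)$ are identical for $t=t(v)$ and $t=t'(v)$, which is precisely why the right-hand side of~\eqref{eq-Amax} does not depend on $t$.

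The key identity is then $F(v,d)=V(T(v))\setminus\{v\}$, so the set of influenced nodes of $T(v)$ is exactly $\{v\}$ together with the influenced nodes lying in $F(v,d)$, giving $|\A{S,\lambda}|=1+|\A{S,\lambda}\cap F(v,d)|$. Maximizing over all admissible $S$ (equivalently over all admissible $S'$), the constant $1$ factors out and $\max_{S'}|\A{S,\lambda}\cap F(v,d)|$ is, by Definition~\ref{def:Amax}, precisely $A_v[d,b-c(v)]$ — the process runs in the whole tree $T(v)$ with $v$ in the target set, and counts the influenced nodes restricted to $F(v,d)$. This yields $\MIS[v,b,0,t]=1+A_v[d,b-c(v)]$. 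I do not anticipate a genuine obstacle; the only points requiring care are to confirm that the counting convention of $A_v$ (diffusion in $T(v)$, nodes counted only in $F(v,d)$) coincides with restricting the full $T(v)$ dynamics, and that the index $b-c(v)$ falls in $\{0,\ldots,\beta-1\}$ so that the entry $A_v[d,b-c(v)]$ is well defined, which holds because $c(v)\ge 1$ and $b\le\beta$.
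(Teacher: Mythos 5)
Your proof is correct and follows essentially the same route as the paper's: both arguments are purely definitional, splitting each feasible target set as $S=\{v\}\cup S'$ with $S'\subseteq F(v,d)$ of cost at most $b-c(v)$, observing that $v\in S\subseteq \A{S,\lambda}$ so the count in $T(v)$ is $1$ plus the count in $F(v,d)$, and matching the resulting maximization with Definition~\ref{def:Amax}. Your additional remarks — that membership of $v$ in $S$ makes the threshold $t$ moot (explaining the $t$-independence of the right-hand side) and that $b-c(v)\le\beta-1$ keeps the index of $A_v$ in range — are sound refinements of the same argument, not a different approach.
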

\begin{proof}
By Definition \ref{defi:MIS}, if $b < c(v)$ then the statement is trivially true. Otherwise, the statement directly 
follows from Definitions \ref{defi:MIS} and \ref{def:Amax}. In fact
we have
\begin{eqnarray*}
MIS[v,b,0,t] &=& \max_{\substack{S \subseteq T(v) \\ v \in S, c(S) \leq b}} |\A{S, \lambda} \cap T(v)| \\
&=& 1+ \max_{\substack{v\in S \subseteq T(v) \\ c(S \cap F(v, d)) \leq b-c(v)}} |\A{S, \lambda} \cap F(v, d)| \; \\
&&        \qquad\qquad\qquad\qquad \qquad\quad                         (\mbox{since } v \in S\subseteq \A{S, \lambda}) \\
&=& 1 + A_v[d, b-c(v)]\, \quad \qquad (\mbox{by definition of } A_v[\cdot, \cdot]).
\end{eqnarray*}
\end{proof}

\begin{definition} \label{def:Bmax}
Let $v$ be a vertex with $d$ children and let $\tau = 1, \dots, \lambda.$
For $i=1,\ldots, d$, $j=0,1,\ldots, \beta$, and $k = 0, 1, \dots, t(v),$ we define
$B_{v, \tau}[i,j,k]$ (resp.\  $B_{v, \tau}[\{i\},j,k]$) to be the maximum number of nodes that can be influenced, within $\lambda$ rounds,
by any influence diffusion process in $F(v,i)$ (resp.\ $T(v_i)$)  assuming that
\begin{itemize}
\item the target set $S$ is contained in $F(v,i)$ (resp.\ $T(v_i)$) and is of cost at most  $j$\,,
\item at  time $\tau+1$ the threshold of $v_{\ell}$ becomes   $t'(v_{\ell}),$ for each $\ell = 1, \dots, i$\,, and
\item $\displaystyle{\sum_{\substack{u \in \{v_1, \dots, v_i\} \\ u \in \A{S, \tau-1}}} w(u,v) \geq k}$\,.
\end{itemize}
We also define $B_{v,\tau}[i,j,k]= -\infty$ (resp.\  $B_{v, \tau}[\{i\},j,k]=-\infty$) when the above constraints are not satisfiable.

\medskip
\end{definition}

Hence, $B_{v,\tau}[\{i\},j,k]$ is the same as $B_{v,\tau}[i,j,k]$ but computed on  the subtree $T(v_{i})$ instead of the forest $F(v, i)$. Since $F(v, 1)=T(v_{1})$,
 as a particular case,  we have $B_{v, \tau}[1,j,k] =  B_{v, \tau}[\{1\},j,k].$

\begin{proposition} \label{prop:Bmax}
For each vertex $v$ with $d$ children, each $b=0,1,\ldots, \beta$,  each $\tau = 1, \dots, \lambda,$ and each
$t\in \{t(v),t'(v)\}$, it holds that
\begin{equation}\label{eq-Bmax}
\MIS[v,b,\tau,t] \geq 1+B_{v, \tau}[d, b, t].
\end{equation}
\end{proposition}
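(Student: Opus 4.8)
The plan is to prove the inequality by exhibiting a single feasible target set for the $\MIS[v,b,\tau,t]$ instance that already influences $1+B_{v,\tau}[d,b,t]$ vertices of $T(v)$ within $\lambda$ rounds; since $\MIS[v,b,\tau,t]$ is defined as a maximum over all feasible target sets, this immediately yields the claimed lower bound. If $B_{v,\tau}[d,b,t]=-\infty$ the inequality is vacuous, so I would assume the value is finite and fix an optimal target set $\widehat{S}\subseteq F(v,d)$ realizing it. By Definition~\ref{def:Bmax} this set satisfies $c(\widehat{S})\le b$, its associated $B$-dynamics runs in $F(v,d)$ with each child's residual threshold $t'(v_\ell)$ switched on only from round $\tau+1$, and it meets $\sum_{u\in C(v)\cap\A{\widehat{S},\tau-1}}w(u,v)\ge t$.

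Next I would run the genuine diffusion of $\widehat{S}$ inside the whole subtree $T(v)$, where $v$ carries threshold $t$ and is not in the target set (so the case-2 branch of Definition~\ref{defi:MIS} is the relevant one). The crucial structural fact is that the subtrees $T(v_1),\dots,T(v_d)$ interact only through $v$: before $v$ is influenced, no vertex of one subtree can affect another, and $v$'s only effect on $T(v)$ is to lower each child's threshold from $t(v_\ell)$ to $t'(v_\ell)=\max\{t(v_\ell)-w(v,v_\ell),0\}$ once it becomes active. Let $\tau^\star$ be the first round at which $v$ becomes influenced in this $T(v)$-process. I would argue $\tau^\star\le\tau$ by a short dichotomy: either $v$ already activates at some round strictly before $\tau$, or it does not, in which case neither process feeds $v$'s influence back into the subtrees through round $\tau-1$ (in the $B$-dynamics the residual thresholds switch on only at $\tau+1$), so the two processes agree up to round $\tau-1$ and the constraint $\sum_{u\in C(v)\cap\A{\widehat{S},\tau-1}}w(u,v)\ge t$ forces $v$ to activate exactly at round $\tau$. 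Either way $v$ is influenced by round $\tau\le\lambda$, and by monotonicity of $\A{\widehat{S},\cdot}$ the weighted sum over $C(v)\cap\A{\widehat{S},\tau-1}$ is at least $t$, so $\widehat{S}$ is genuinely feasible for the case-2 requirement of $\MIS[v,b,\tau,t]$.

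It then remains to compare the final influence in $F(v,d)$ under the two processes. If $\tau^\star=\tau$, then $v$ starts contributing to its children exactly at round $\tau+1$, which is precisely the residual-threshold switch in the definition of $B$, so the two dynamics are identical throughout all $\lambda$ rounds and exactly $B_{v,\tau}[d,b,t]$ vertices of $F(v,d)$ are influenced. If $\tau^\star<\tau$, then in $T(v)$ the vertex $v$ helps its children strictly earlier, i.e.\ each child's effective threshold drops to $t'(v_\ell)$ from round $\tau^\star+1<\tau+1$ onward; invoking the monotonicity of threshold diffusion (lowering a threshold earlier can only enlarge the influenced set at every round), the $T(v)$-process influences a superset of the $B$-process and hence at least $B_{v,\tau}[d,b,t]$ vertices of $F(v,d)$. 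Adding the single vertex $v$, the target set $\widehat{S}$ influences at least $1+B_{v,\tau}[d,b,t]$ vertices of $T(v)$, which proves $\MIS[v,b,\tau,t]\ge 1+B_{v,\tau}[d,b,t]$.

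The main obstacle I anticipate is the $\tau^\star<\tau$ case, where I must make the monotonicity claim rigorous: advancing the round at which the children's residual thresholds become active produces a pointwise-larger influenced set at each round. I would isolate this as a brief auxiliary monotonicity lemma for weighted threshold dynamics and apply it to the single switch from activation time $\tau+1$ down to $\tau^\star+1$. This also explains why the statement is only an inequality rather than an equality: $B_{v,\tau}$ conservatively assumes $v$ first helps its children at round $\tau+1$, whereas the true $T(v)$-dynamics may activate $v$ earlier and thus strictly exceed $B_{v,\tau}[d,b,t]$; the exact value is recovered later when maximizing over $\tau$.
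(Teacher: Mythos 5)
Your proof is correct and takes essentially the same approach as the paper: exhibit the target set achieving $B_{v,\tau}[d,b,t]$ as a feasible candidate for $\MIS[v,b,\tau,t]$ (with $v$'s threshold set to $t$), verify the case-2 feasibility condition of Definition~\ref{defi:MIS}, and count the influenced vertices including $v$ itself. The only difference is one of rigor, not of route: the paper leaves implicit the monotonicity argument for the case where $v$ activates strictly before round $\tau$, which you correctly isolate and spell out.
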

\begin{proof}
Let $S$ be a target set achieving $B_{v,\tau}[d,b,t].$
Then  $B_{v,\tau}[d,b,t]$ is the number of influenced nodes within $\lambda$ rounds,
 when the influence diffusion process is run on $F(v, d)$ starting with
$S$. We recall that, by definition, 
  the following conditions are satisfied.
\begin{enumerate}
\item $S \subseteq F(v, d)$ and  $c(S)\leq b$
\item $\displaystyle{\sum_{\substack{i = 1, \dots, d \\ v_i \in \A{S, \tau-1}}} w(v_i, v) \geq t}$
\item from round  $\tau+1$ the threshold of $v_{\ell}$ is decreased to $t'(v_{\ell}),$ for each $\ell = 1, \dots, d$
\end{enumerate}

Now if we use the same target set $S$ in the subtree $T(v)$ with the original thresholds,
 except for $t(v) = t$ we get that $v$ is influenced within time $\tau$ as a consequence of condition 2.
We observe that $\MIS[v, b, \tau, t]$ is the largest possible size achievable for
$\A{S, \lambda}$ under condition 1,  and the condition that $v$ is influenced within round $\tau$.
Finally, considering that the set of influenced vertices contains $v$, we have (\ref{eq-Bmax}).
\end{proof}

\begin{definition} \label{def:Cmax}
Let $v$ be a vertex with $d$ children.
For $i=1,\ldots, d$ and $j=0,\ldots, \beta$, let $C_v[i,j]$ be the maximum number of nodes that can be influenced,
within $\lambda$ rounds, by an influence diffusion process in $F(v, i)$ assuming that the target set $S\subseteq F(v, i)$ is of cost at most $j$.
\end{definition}

\begin{proposition} \label{prop:Cmax}
For each vertex $v$ with $d$ children, each $b= 0,1,\ldots, \beta$, and each
$t\in \{t(v),t'(v)\}$ such that there exists a target set $S \subseteq F(v,d)$ with $c(S) \leq b$ and $v \not \in \A{S, \lambda}$, it holds that
 \begin{equation} \label{eq-cmaxmis}\MIS[v,b,\spec,t] = C_v[d,b].\end{equation}.
\end{proposition}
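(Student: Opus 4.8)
The plan is to establish the equality (\ref{eq-cmaxmis}) by proving the two inequalities separately, both resting on a single structural observation: as long as $v$ is not influenced, it contributes nothing to the threshold sum of any of its children, so the diffusion inside $T(v)$ and the diffusion inside the forest $F(v,d)$ behave identically on the vertices of $F(v,d)$. Concretely, I would first record the following claim: if $S \subseteq F(v,d)$ is such that $v \notin \A{S,\lambda}$ when the process is run in $T(v)$, then for every $\tau \le \lambda$ the set $\A{S,\tau}$ computed in $T(v)$ equals the set $\A{S,\tau}$ computed in $F(v,d)$. This follows by a straightforward induction on $\tau$: the only arcs of $T(v)$ missing from $F(v,d)$ are those incident to $v$, and since $v$ never enters the influenced set in either process, every vertex of $F(v,d)$ sees exactly the same influenced in-neighbours, hence becomes influenced at the same round, in both processes.

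For the upper bound $\MIS[v,b,\spec,t] \le C_v[d,b]$ I would take any target set $S$ realizing $\MIS[v,b,\spec,t]$. By Definition \ref{defi:MIS} we have $c(S) \le b$ and $v \notin \A{S,\lambda}$, so in particular $v \notin S$ and therefore $S \subseteq F(v,d)$. Because $v$ stays uninfluenced, the whole influenced set is contained in $F(v,d)$, and by the claim above its size equals the number of vertices influenced by $S$ in the forest process; by Definition \ref{def:Cmax} this is at most $C_v[d,b]$.

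For the lower bound I would begin with a target set $S$ attaining $C_v[d,b]$ in $F(v,d)$, so that $S \subseteq F(v,d)$ and $c(S) \le b$. If running $S$ in $T(v)$ leaves $v$ uninfluenced within $\lambda$ rounds, then $S$ is feasible for the $\spec$-instance and, by the claim, influences exactly $C_v[d,b]$ vertices, giving $\MIS[v,b,\spec,t] \ge C_v[d,b]$ and hence equality. The delicate case, which I expect to be the main obstacle, is when the forest-optimal $S$ would drive $\sum_{u \in C(v) \cap \A{S,\lambda-1}} w(u,v)$ up to $t$ and thereby influence $v$ inside $T(v)$. This is exactly where the hypothesis of the proposition --- that some target set of cost at most $b$ keeps $v$ uninfluenced --- is needed: the goal is to show that an optimal forest configuration can always be chosen (or locally rearranged) so that the weight reaching $v$ stays below $t$ without decreasing the number of influenced vertices. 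I would attempt this through an exchange argument on the subset of children allowed to become influenced, using the monotonicity of $\MIS$ in $b$ and in $t$ recorded in the remark above; getting this rearrangement to preserve optimality is the crux of the whole proof, while the two inequalities then combine immediately to yield (\ref{eq-cmaxmis}).
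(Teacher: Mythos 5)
Your structural claim (the tree and forest processes coincide on $F(v,d)$ for all rounds $\tau\le\lambda$ whenever $v$ stays uninfluenced) and the resulting upper bound $\MIS[v,b,\spec,t]\le C_v[d,b]$ are correct. The gap is exactly where you located it, and it is worse than you suspect: the ``delicate case'' cannot be repaired by any exchange argument, because the claimed equality is false there. Take $T(v)$ to be the three-vertex tree in which $u_1$ is the only child of $v$ and $x$ is the only child of $u_1$, with both arcs of each tree edge present and of weight $1$, unit costs, thresholds $t(x)=1$, $t(u_1)=2$, $t=t(v)=1$, budget $b=\beta=1$, and $\lambda=2$. In the forest $F(v,1)=T(u_1)$ the target set $\{u_1\}$ influences both $u_1$ and $x$, so $C_v[1,1]=2$. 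In $T(v)$, however, $\{u_1\}$ influences $v$ at round $1$, so it is infeasible for $\tau=\spec$; the only feasible sets of cost at most $1$ are $\emptyset$ and $\{x\}$ (and $x$ alone cannot influence $u_1$, whose threshold is $2$), whence $\MIS[v,1,\spec,1]=1<2=C_v[1,1]$. The hypothesis of Proposition~\ref{prop:Cmax} is satisfied here by $S=\{x\}$: it guarantees that $\MIS[v,b,\spec,t]$ is finite, but not that the maximum constrained by $v\notin\A{S,\lambda}$ reaches the unconstrained forest maximum. So no rearrangement preserving optimality exists in general, and your instinct that this was the crux is precisely right.

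You should also know that the paper's own proof founders on the step you refused to hand-wave: in its chain of equalities, the passage from (\ref{C:2}) to (\ref{C:3}) silently deletes the constraint $v\notin\A{S,\lambda}$ from the maximization, and the justification offered (independence of the subtrees when $v$ is not influenced) is valid for each fixed feasible $S$ but does not justify enlarging the feasible region. What survives, and what the rest of the paper actually needs, are two weaker facts: (i) your upper bound $\MIS[v,b,\spec,t]\le C_v[d,b]$, and (ii) $\max_{\tau\in\{0,1,\ldots,\lambda,\spec\}}\MIS[v,b,\tau,t]\ge C_v[d,b]$. Fact (ii) holds because if a set $S^*$ attaining $C_v[d,b]$ does influence $v$ in $T(v)$, first at some round $\tau^*\le\lambda$, then by monotonicity of the dynamics under the addition of arcs the tree process started from $S^*$ influences every node that the forest process influences, plus $v$ itself, so $\MIS[v,b,\tau^*,t]\ge C_v[d,b]+1$; this is the same monotonicity argument the paper uses in the second case of the proof of Lemma~\ref{lemma:MIS-ABC}, and facts (i) and (ii) are all that lemma requires. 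The productive fix is therefore to replace the equality of Proposition~\ref{prop:Cmax} by (i) together with (ii), not to keep searching for the exchange argument.
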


\begin{proof}
We have
 \begin{eqnarray}
 \MIS[v, b, \infty, t] &=& \max_{\substack{S \subseteq F(v,d) \\ c(S) \leq b \\ v \not \in \A{S, \lambda}}} |\A{S, \lambda} | \label{C:1}\\
 &=& \max_{\substack{S \subseteq F(v,d) \\ c(S)\leq b \\ v \not \in \A{S, \lambda}}} \sum_{i=1}^d |\A{S, \lambda} \cap T(v_i)| \label{C:2}\\
 &=& \max_{\substack{S \subseteq F(v,d) \\ c(S) \leq b}} \sum_{i=1}^d |\A{S \cap T(v_i), \lambda} \cap T(v_i)| \label{C:3} \\
 &=& C_v[d,b], \label{C:4}
\end{eqnarray}
where
(\ref{C:2}) follows from (\ref{C:1}) because, assuming $v$ is not influenced, there is no influence spreading between $T(v_i)$ and $T(v_j)$ for any $1 \leq i, j\leq d$ with $i\neq j$;
(\ref{C:3}) follows from (\ref{C:2}) because if there is no influence spreading between two different  subtrees of $F(v,d)$, then the set of influenced nodes can be computed independently in each subtree; finally (\ref{C:4}) follows from (\ref{C:3}) by the definition of $C_v[d,b].$
%
\end{proof}

\begin{lemma} \label{lemma:MIS-ABC}
For each vertex $v$ with $d$ children, each $b\in \{0,1,\ldots, \beta\}$, and each
$t\in \{t(v),t'(v)\}$, it holds that
\begin{equation}\label{eq:new}
\max_{\tau\in\{0,1,\ldots, \lambda,\infty\}} \MIS[v, b, \tau, t] =\max\left \{1+ A_{v}[d, b-c(v)],  1+  \max_{1\leq \tau_1 \leq \lambda} B_{v, \tau_1}[d,b,t], C_v[d,b]\right\}.
\end{equation}
Moreover, the knowledge  of quantities   $A_{v}[d, b-c(v)]$, $B_{v, \tau}[d,b,t]$,  and $C_v[d,b]$ 
also allows the computation of $\MIS[v, b, \tau, t]$ for each value of $\tau \in \{0,1\ldots,\lambda,\infty\}$. 
\end{lemma}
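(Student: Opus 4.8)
The plan is to evaluate the left-hand side of \eqref{eq:new} by partitioning the range of $\tau$ into the three regimes that match the three auxiliary functions, and then to recombine. For $\tau=0$ Proposition~\ref{prop:Amax} already yields $\MIS[v,b,0,t]=1+A_{v}[d,b-c(v)]$ (interpreted as $-\infty$ when $b<c(v)$), and for $\tau=\spec$ Proposition~\ref{prop:Cmax} yields $\MIS[v,b,\spec,t]=C_v[d,b]$; these two regimes are therefore settled by earlier results. The whole difficulty lies in the middle regime $\tau\in\{1,\ldots,\lambda\}$, for which I would prove the stronger per-value identity
\begin{equation}\label{eq:plan-mid}
\MIS[v,b,\tau,t]=1+\max_{1\le \tau^{*}\le \tau}B_{v,\tau^{*}}[d,b,t],\qquad \tau\in\{1,\ldots,\lambda\}.
\end{equation}
Granting \eqref{eq:plan-mid}, both statements of the lemma drop out: taking the maximum of the three regimes over all admissible $\tau$ and using the identity $\max_{1\le\tau\le\lambda}\max_{1\le\tau^{*}\le\tau}=\max_{1\le\tau^{*}\le\lambda}$ gives exactly the right-hand side of \eqref{eq:new}, while the ``moreover'' claim follows because \eqref{eq:plan-mid} together with Propositions~\ref{prop:Amax} and~\ref{prop:Cmax} expresses every individual value $\MIS[v,b,\tau,t]$ as an explicit function of the stored quantities $A_{v}[d,b-c(v)]$, $\{B_{v,\tau^{*}}[d,b,t]\}_{\tau^{*}=1}^{\lambda}$ and $C_v[d,b]$.

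For the direction ``$\ge$'' of \eqref{eq:plan-mid} I would argue that $\MIS[v,b,\cdot,t]$ is non-decreasing on $\{1,\ldots,\lambda\}$: any target set $S\subseteq F(v,d)$ that makes $v$ influenced within round $\tau^{*}$ also makes it influenced within any round $\tau\ge\tau^{*}$, since the children's weighted influence on $v$ is monotone in the round index. Hence $\MIS[v,b,\tau,t]\ge \MIS[v,b,\tau^{*},t]\ge 1+B_{v,\tau^{*}}[d,b,t]$ by Proposition~\ref{prop:Bmax}, and maximizing over $\tau^{*}\le\tau$ gives the lower bound.

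The direction ``$\le$'' is the main obstacle, because Proposition~\ref{prop:Bmax} only provides a one-sided bound for each individual $\tau$, and it is precisely here that the looseness built into Definition~\ref{defi:MIS}---where $v$ is only required to be influenced \emph{within} round $\tau$, not exactly at round $\tau$ (cf.\ the footnote after that definition)---must be exploited. Starting from a set $S$ that realizes $\MIS[v,b,\tau,t]$, I would let $\tau^{*}\in\{1,\ldots,\tau\}$ be the actual first round at which $v$ becomes influenced, so that the children's weighted influence on $v$ reaches $t$ precisely at round $\tau^{*}-1$. The key observation is that, since $v\notin S$, the diffusion that $S$ produces inside the forest $F(v,d)$ coincides with the one prescribed by $B_{v,\tau^{*}}$: in $T(v)$ the node $v$ begins contributing $w(v,v_{\ell})$ to each child $v_{\ell}$ exactly from round $\tau^{*}+1$, which is the round from which $B_{v,\tau^{*}}$ lowers the thresholds of the children to their residual values $t'(v_{\ell})$, and lowering the threshold of a child that is already influenced has no effect. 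Consequently $S$ is feasible for $B_{v,\tau^{*}}[d,b,t]$ and influences exactly $\MIS[v,b,\tau,t]-1$ nodes of $F(v,d)$, whence $\MIS[v,b,\tau,t]-1\le B_{v,\tau^{*}}[d,b,t]\le \max_{1\le\tau^{*}\le\tau}B_{v,\tau^{*}}[d,b,t]$.

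The points I expect to need the most care are, first, the round-alignment step in the ``$\le$'' direction---namely that it is the \emph{actual} activation round $\tau^{*}$, and not the nominal parameter $\tau$, that makes the two diffusion processes identical---and, second, the bookkeeping of the degenerate values $-\infty$ (for instance when $b<c(v)$, or when the constraint $v\notin\A{S,\lambda}$ of Proposition~\ref{prop:Cmax} cannot be met), so as to be sure that the outer maximum in \eqref{eq:new} is still attained by a genuinely feasible regime.
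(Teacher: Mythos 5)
Your overall plan is the same as the paper's: split $\tau$ into the three regimes, settle $\tau=0$ by Proposition~\ref{prop:Amax}, settle $\tau=\spec$ by Proposition~\ref{prop:Cmax}, and handle $\tau\in\{1,\ldots,\lambda\}$ by round-alignment. Your middle-regime argument is correct and is essentially identical to the paper's: the paper also takes an optimal $S$, lets $\tilde\tau$ be the first round at which $v$ is influenced, and identifies the diffusion on $T(v)$ restricted to $F(v,d)$ with the process defining $B_{v,\tilde\tau}[d,b,t]$; your per-value identity $\MIS[v,b,\tau,t]=1+\max_{1\le\tau'\le\tau}B_{v,\tau'}[d,b,t]$ is exactly the paper's intermediate formula \eqref{eq:new4}, and your monotonicity observation on $\{1,\ldots,\lambda\}$ is the one the paper uses for the ``moreover'' part.

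There is, however, one genuine gap, which you noticed but misclassified as ``bookkeeping.'' Proposition~\ref{prop:Cmax} is \emph{conditional}: it gives $\MIS[v,b,\spec,t]=C_v[d,b]$ only when there exists $S\subseteq F(v,d)$ with $c(S)\le b$ and $v\notin\A{S,\lambda}$. When every affordable target set inevitably influences $v$ within $\lambda$ rounds, one has $\MIS[v,b,\spec,t]=-\infty$, while $C_v[d,b]$ is a finite quantity that still sits on the right-hand side of \eqref{eq:new}. In that case your three-regime maximum only shows that the left-hand side equals $\max\{1+A_v[d,b-c(v)],\,1+\max_{1\le\tau_1\le\lambda}B_{v,\tau_1}[d,b,t]\}$, so the inequality ``left-hand side $\ge$ right-hand side'' additionally requires $C_v[d,b]\le 1+\max_{1\le\tau_1\le\lambda}B_{v,\tau_1}[d,b,t]$, and nothing in your writeup yields this. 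It is a substantive inequality, not a $-\infty$ accounting issue: the paper devotes a separate claim to it (asserting $C_v[d,b]\le B_{v,1}[d,b,t]+1$ in this case), proved by taking $S$ achieving $C_v[d,b]$, noting that the process on $F(v,d)$ is the process on $T(v)$ with $w(v,v_i)$ set to $0$, invoking monotonicity of the dynamics in the edge weights to conclude that on $T(v)$ at least $C_v[d,b]$ nodes of $F(v,d)$ are influenced, and then---using the case hypothesis that this $S$ \emph{does} influence $v$, say first at round $\tilde\tau$---applying precisely your round-alignment step to bound that count by $B_{v,\tilde\tau}[d,b,t]$. Your alignment machinery is strong enough to carry this out, but the case split on the satisfiability of the constraint in Proposition~\ref{prop:Cmax}, together with the weight-monotonicity comparison, must actually be performed; without it \eqref{eq:new} is unproven in the degenerate case. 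The same conditionality also affects your ``moreover'' claim for $\tau=\spec$, which likewise cannot simply read off $\MIS[v,b,\spec,t]=C_v[d,b]$.
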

\begin{proof}
{For notational convenience, let $\M$\ denote the right hand side of (\ref{eq:new}).}
First, suppose that there exists a target set $S \subseteq F(v,d)$ with $c(S) \leq b$ such that $v \not \in \A{S, \lambda}$.
Then, by Propositions \ref{prop:Amax}, \ref{prop:Bmax}, and \ref{prop:Cmax}, we have
$$\max_{\tau\in\{0,1,\ldots, \lambda,\infty\}} \MIS[v, b, \tau, t] \geq M.$$
Now, suppose that for every target set $S \subseteq F(v,d)$
with $c(S) \leq b$ we have $v \in \A{S, \lambda}$. 
We claim that in this case we have
$$C_v[d,b]\le B_{v,1}[d,b,t]+1\,.$$
Indeed, let $S$ be a target set achieving $C_v[d,b]$.
Running the influence diffusion process on $F(v,d)$ with $S$ is equivalent to
running the process on $T(v)$ and ignoring the influence of $v$ on its children (which can be modelled by setting
$w(v,v_i) = 0$ for each $i= 1, \dots, d$).
It can be seen that, given the target set $S$,
increasing the weights on some edges cannot decrease the
number of nodes in $F(v,d)$ influenced within $\lambda$ rounds.
This implies that $C_v[d,b]$ is not greater than the number of nodes
in $F(v,d)$ influenced within $\lambda$ rounds when the influence diffusion process is run from
$S$ in the tree $T(v)$ with the original threshold, which, in turn, does not exceed
$B_{v,1}[d,b,t]+1$.

Summarizing the above two cases, we see that in any case we have
\begin{equation} \label{eq:global_geq}
\max_{\tau\in\{0,1,\ldots, \lambda,\infty\}} \MIS[v, b, \tau, t] \geq \M.
\end{equation}

To see that the converse inequality
\begin{equation} \label{eq:global_leq}
\max_{\tau\in\{0,1,\ldots, \lambda,\infty\}} \MIS[v, b, \tau, t] \leq \M
\end{equation}
also holds, let $\tau^* \in {\rm argmax}_{\tau\in\{0,1,\ldots, \lambda,\infty\}} \MIS[v, b, \tau, t].$

{If $\tau^* =0$, we have $1+ A_{v}[d, b-c(v)] =  \MIS[v, b, \tau^*, t]$ by Proposition \ref{prop:Amax}.
Analogously,  if $\tau^* = \infty$ then $v \not \in \A{S,\lambda}$ for the
target set $S$ achieving $\MIS[v, b, \tau^*, t]$ by Proposition \ref{prop:Cmax}. Then, by Definition \ref{defi:MIS}, we have $C_{v}[d, b] =  \MIS[v, b, \tau^*, t].$
Hence, in both of the above cases, the desired inequality (\ref{eq:global_leq}) also holds {\em a fortiori}.}

Let us now assume that $\tau^* \in \{1, \dots, \lambda\}.$ Let $S \subseteq F(v,d)$ be a
target set of cost at most $b$ which achieves $\MIS[v, b, \tau^*, t].$
Let $\tilde{\tau} \leq \tau^*$ be the minimum positive integer such that $v \in \A{S, \tilde{\tau}}.$
Therefore, no influence is spread from $v$ towards the  subtrees of $F(v, d)$ before round $\tilde\tau.$
Let $S_i = S \cap T(v_i).$ The previous observation implies that 
$\A{S, \tau} \cap T(v_i) = \A{S \cap T(v_i), \tau}$ for each $\tau \leq \tilde{\tau},$ i.e.,
the spread of influence within $T(v_i)$
until round $\tau$ is only determined by the set $S_i.$ From $\tilde\tau$ on, in $T(v_i)$
the fact that $v$ is influenced is equivalent to saying that the threshold of $v_i$ has been decreased to
$t'(v_i).$

Formally, this means that
$$\left|\bigcup_i \left(\A{S, \lambda} \cap T(v_i)\right)\right| \le B_{v,\tilde{\tau}}[d,b,t]$$
hence  we have
\begin{eqnarray} 
\nonumber \max_{\tau \in \{1,\ldots,\lambda\}} \MIS[v, b, \tau, t] &=& \MIS[v, b, \tau^*, t] \\
\nonumber &=& 1 + \left|\bigcup_i \left(\A{S, \lambda} \cap T(v_i)\right)\right| \\
\nonumber &\le& 1 + B_{v,\tilde{\tau}}[d,b,t]\\
          &\leq& 1 + \max_{\tau \in \{1,\ldots,\lambda\}} B_{v,\tau}[d,b,t].
\label{eq:maxtauMIS}
\end{eqnarray}

This concludes the proof of (\ref{eq:global_leq}) that, together with \eqref{eq:global_geq},  yields the desired result, i.e., 
formula (\ref{eq:new}). 

Notice that the above reasoning  proves a slightly more general fact, that is, the inequality
\begin{equation}\label{eq:new3}
 \max_{\tau' \in \{1,\ldots,\tau\}} \MIS[v, b, \tau', t]\leq 1 + \max_{\tau' \in \{1,\ldots,\tau\}} B_{v,\tau'}[d,b,t]
\end{equation}
for any $\tau=1,2,\ldots,\lambda$. 
Formula (\ref{eq:new3}),  together with Proposition \ref{prop:Bmax}, allows us to conclude that
\begin{equation}\label{eq:new4}
 \max_{\tau' \in \{1,\ldots,\tau\}} \MIS[v, b, \tau', t]= 1 + \max_{\tau' \in \{1,\ldots,\tau\}} B_{v,\tau'}[d,b,t]
\end{equation}
for any $\tau=1,2,\ldots,\lambda.$



Moreover, for each $\tau=1,2,\ldots,\lambda -1$ we also have $\MIS[v, b, \tau, t] \le \MIS[v, b, \tau+1, t]$. Therefore by comparing  $\max_{\tau'\in\{1,...,\tau\}} \MIS[v,b,\tau',t]$ and $\max_{\tau'\in\{1,...,\tau+1\}} \MIS[v,b,\tau',t]$, we are also able to compute
 $\MIS[v, b, \tau, t]$ for each value of $\tau=1,2,\ldots,\lambda.$ Recalling that, for  $\tau=0$ and $\tau=\spec$, the value of $\MIS[v, b, \tau, t]$ is easily determined using Propositions \ref{prop:Amax} and  \ref{prop:Cmax}, respectively, we have that the knowledge  of quantities   $A_{v}[d, b-c(v)]$, $C_v[d,b]$, and $B_{v, \tau_1}[d,b,t]$  for each $\tau_1=1,\ldots,\lambda$ 
also allows the computation of $\MIS[v, b, \tau, t]$ for each value of $\tau \in \{0,1\ldots,\lambda,\infty\}$. 
\end{proof}

\bigskip
%

\begin{lemma}\label{lemma1}
For each  vertex $v$,
for each $b=0,1,\ldots, \beta$, and for each $t\in \{t(v),t'(v)\}$,
the quantity $\MIS[v,b,0,t]$ can be computed in time $O(d \lambda b^2),$ where $d$ is the number of children of $v$.
\end{lemma}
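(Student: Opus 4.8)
The plan is to reduce the evaluation of $\MIS[v,b,0,t]$ to a single knapsack-type dynamic program over the children of $v$. The entry point is Proposition~\ref{prop:Amax}: it already tells us that $\MIS[v,b,0,t]=1+A_v[d,b-c(v)]$ when $b\ge c(v)$ (and $-\infty$ otherwise), a value that is independent of $t$. So the whole task is to compute the quantity $A_v[d,\cdot]$ of Definition~\ref{def:Amax}, and it is enough to exhibit an efficient recurrence for the table $A_v[i,j]$.

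First I would isolate the contribution of one child. Because the target set contains $v$, the vertex $v$ is influenced at round $0$, so its influence $w(v,v_i)$ is available to every child $v_i$ from round $1$ on, and the subtrees $T(v_1),\dots,T(v_d)$ interact only through $v$; hence, conditioned on $v$ being influenced, they evolve independently. Therefore, if budget $j'$ is allotted to $T(v_i)$, the maximum number of vertices that can be influenced inside $T(v_i)$ is
$$\Phi(v_i,j'):=\max_{\tau\in\{0,1,\dots,\lambda,\spec\}}\MIS[v_i,j',\tau,t'(v_i)],$$
where the residual threshold $t'(v_i)$ is used exactly because $v$'s contribution is present from round $1$; for $\tau=0$ the value is threshold-independent (Proposition~\ref{prop:Amax}), so using $t'(v_i)$ there is harmless. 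Since the children of $v$ have already been processed, each entry $\MIS[v_i,j',\tau,t'(v_i)]$ is available (equivalently, it is produced by Lemma~\ref{lemma:MIS-ABC}), and $\Phi(v_i,j')$ is a maximum over $O(\lambda)$ of them.

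I would then assemble the children with the recurrence
$$A_v[0,j]=0,\qquad A_v[i,j]=\max_{0\le j'\le j}\bigl(A_v[i-1,j-j']+\Phi(v_i,j')\bigr),$$
whose correctness rests on the independence observation: an optimal target set for $F(v,i)$ partitions its budget among the first $i$ subtrees, and by independence each subtree can be optimized in isolation, which is precisely what the recurrence does. Reading off $A_v[d,b-c(v)]$ and adding $1$ yields $\MIS[v,b,0,t]$. For the running time, precomputing all $\Phi(v_i,j')$ for $1\le i\le d$ and $0\le j'\le b$ costs $O(d\lambda b)$ (each is a maximum over $O(\lambda)$ values of $\tau$), while filling $A_v[i,j]$ for $1\le i\le d$ and $0\le j\le b$ costs $O(db^2)$ (each entry is a maximum over at most $b+1$ splits). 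The total $O(d\lambda b+db^2)$ lies within the claimed $O(d\lambda b^2)$ since $\lambda,b\ge 1$.

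The hard part will be the justification of the per-child reduction to $\Phi(v_i,j')$, rather than the knapsack bookkeeping, which is routine. Concretely, one must argue that (i) conditioning on $v$ being influenced really does decouple the subtrees, so that no influence ever passes between $T(v_i)$ and $T(v_j)$ without going through $v$, and (ii) replacing the presence of the already-influenced $v$ by the residual threshold $t'(v_i)$ faithfully reproduces the dynamics inside $T(v_i)$ for every choice of the activation round $\tau$, including the boundary case $t'(v_i)=0$ (where $v$ alone activates $v_i$ at round $1$) and the case $\tau=\spec$ (where $v_i$ stays uninfluenced despite $v$'s help). Once these two points are established, maximizing over $\tau$ independently in each subtree is legitimate and the complexity bound follows.
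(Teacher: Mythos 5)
Your proposal is correct and takes essentially the same route as the paper: reduce to $A_v[d,b-c(v)]$ via Proposition~\ref{prop:Amax}, then fill the table $A_v[i,j]$ by a budget-splitting (knapsack) recurrence whose per-child term is a maximum of already-computed $\MIS[v_i,\cdot,\cdot,\cdot]$ entries; your single-threshold $\Phi(v_i,j')=\max_{\tau}\MIS[v_i,j',\tau,t'(v_i)]$ coincides with the paper's maximum over pairs $(\tau_i,t_i)$ with the constraint that $t'(v_i)$ requires $\tau_i\geq 1$, since $\MIS$ is non-increasing in the threshold and threshold-independent when $\tau=0$. Your precomputation of $\Phi$ even yields the marginally sharper bound $O(d\lambda b + db^2)$, which sits inside the claimed $O(d\lambda b^2)$.
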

\begin{proof}
%

If $b<c(v)$ then the problem is infeasible and $\MIS[v,b,0,t]=-\infty$. Otherwise, by Proposition \ref{prop:Amax}, it is enough to show that we can compute $A_v[d,b-c(v)]$ in the claimed bound.
This will be a consequence of the following recursive characterization of $A_v[i,j]$, for each $i=1,2,\ldots,d$ and  $j=0,1,\ldots,b-c(v)$.

\medskip

\noindent
For  $i=1$, we have
\begin{equation} \label{eq:A-1}
A_v[1,j]=
\max_{\tau_1,t_1} \{\MIS[v_1,j,\tau_1,t_1]\},
\end{equation}
where
$$1.~\tau_1 \in \{0,\ldots,\lambda,\spec\} \qquad 2.~t_1\in \{ t(v_1), t'(v_1)\} \qquad 3.~{\rm{ if }}~t_1=t'(v_1) {\rm{~then}}~\tau_1 \geq  1.$$

To see that the left hand side of (\ref{eq:A-1}) is at least as large as the right hand side we observe that
\begin{eqnarray*}
A_v[1,j] &=& \max_{\tau_1 \in \{0, 1, \dots, \lambda, \infty\}} MIS[v_1, j, \tau_1, t'(v_1)] \\
&\geq&
\max \{MIS[v_1, j, 0, t(v_1)], \max_{\tau_1 \in \{1, \dots, \lambda, \infty\}} MIS[v_1, j, \tau_1, t'(v_1)]\} \\
\end{eqnarray*}
and the last expression is exactly the right hand side of (\ref{eq:A-1}).

For the inequality in the other direction, let $S \subseteq T(v)$ be a target set (of cost at most $j$) achieving $A_v[1, j].$
If $v_1 \in S$ then the node $v$ does not have any effect on the nodes influenced in $T(v_1).$ Hence we have
\begin{equation} \label{eq:A-2}
A_v[1, j] = | \A{S, \lambda} \cap T(v_1) | \leq MIS[v_1, j, 0, t(v_1)|.
\end{equation}
{If $v_1 \not \in S$ then let $\tau^* \in \{1, \dots, \lambda, \infty\}$ be the minimum positive integer such that  $v_1$ is influenced at time $\tau^*$ because of $S$;
$v_1 \not \in \A{S, \lambda}$ then $\tau^* = \infty.$}
Then, since in the definition of $A_v[1, j]$ we assume that $v$ is influenced, or equivalently that the threshold of $v_1$ is reduced to $t'(v_1)$, we have that
 \begin{equation} \label{eq:A-3}
 A_v[1, j] =   | \A{S, \lambda} \cap T(v_1) | \leq \MIS[v_1, j, \tau^*, t'(v_1)],
 \end{equation}
 where the last inequality follows by observing that, in the middle expression, the role of $v$ is only to reduce the threshold of $v_1$ to $t'(v_1).$

The last expressions in both (\ref{eq:A-2})-(\ref{eq:A-3}) contribute to the $\max$ on the right hand side of (\ref{eq:A-1}),
 hence this is also an upper bound for
 $A_v[1,j]$.

\bigskip

\noindent
For $i>1$, we will show that
\begin{equation} \label{eq:A-4}
A_v[i,j]=\max_{0\leq a \leq j} \left \{ A_v[i-1,a] + \max_{\tau_i,t_i} \{\MIS[v_i,j-a,\tau_i,t_i]\} \right \}
\end{equation}
where
$$1.~\tau_1 \in \{0,\ldots,\lambda,\spec\} \qquad 2.~t_1\in \{ t(v_1), t'(v_1)\} \qquad 3.~{\rm{ if }}~t_i=t'(v_i) {\rm{~then}}~\tau_i \geq  1.$$

%

This means that we can compute the quantity $A_v[i,j]$ by considering all possible ways of partitioning the budget  $j$ into two values $a$ and $j-a$, recursively solving a subproblem on $F(v,i-1)$ with budget $a$ and a subproblem on $T(v_i)$ with budget $j-a$, and then combining the solutions.

In order to prove (\ref{eq:A-4}) we have

\begin{eqnarray}
A_v[i,j] &=& \max_{\substack{S \subseteq F(v,i) \\ c(S) \leq j}} |\A{S \cup \{v\}, \lambda} \cap F(v,i)|   \label{eqnarr:A-1}\\
&=& \max_{\substack{S \subseteq F(v,i)\\ c(S) \leq j}}
\left\{
	|\A{\left(S \cap F(v,i-1)\right) \cup \{v\}, \lambda} \cap F(v,i-1)| \right.  \nonumber\\
& & \left. ~~~~~~~~~~~~~~~~~+   	|\A{\left(S \cap T(v_i)\right) \cup \{v\}, \lambda} \cap T(v_i)|  \right\}  \label{eqnarr:A-2}\\
&=& \max_{0 \leq a \leq j} \left\{
  \max_{\substack{c(S_1) \le a, S_1 \subseteq F(v, i-1)}}
	|\A{S_1 \cup \{v\}, \lambda} \cap F(v,i-1)| \right. \nonumber  \\
& & \left. ~~~~~~~~~~+  \max_{\substack{c(S_2) \le j-a, S_2 \subseteq T(v_i)}} |\A{S_2 \cup \{v\}, \lambda} \cap T(v_i)
\right\} \label{eqnarr:A-3} \\
&=& \max_{0 \leq a \leq j} \left\{ A_v[i-1, a] + \max_{\tau_i, t_i} MIS[v_i, j-a, \tau_i, t_i] \right\} \label{eqnarr:A-4}
\end{eqnarray}

where
\begin{itemize}
\item (\ref{eqnarr:A-2}) follows from (\ref{eqnarr:A-1}) because the spread of influence between $F(v, i-1)$ to $T(v_i)$  can only happen
via $v,$
\item (\ref{eqnarr:A-3}) is obtained from  (\ref{eqnarr:A-2}) by standard algebraic manipulation, taking into account that $F(v, i-1) \cap T(v_i) = \emptyset,$
\item (\ref{eqnarr:A-4}) follows from (\ref{eqnarr:A-3}) because
$$\max_{\substack{c(S_1) \le a, S_1 \subseteq F(v, i-1)}}
	|\A{S_1 \cup \{v\}, \lambda} \cap F(v,i-1)| = A_v[i-1, a] $$
	holds by definition and, by proceeding in perfect analogy with the proof of (\ref{eq:A-1}), one can prove that
$$\max_{\substack{c(S_2) \le  j-a, S_2 \subseteq T(v_i)}} |\A{S_2 \cup \{v\}, \lambda} \cap T(v_i) |=
   \max_{\tau_i, t_i} MIS[v_i, j-a, \tau_i, t_i]$$ under the conditions
   $$1.~\tau_i\in \{0,\ldots,\lambda,\spec\} \qquad 2.~t_i\in \{ t(v_i), t'(v_i)\} \qquad 3.~{\rm{ if }}~t_i=t'(v_i) \rm{~then}~\tau_i \geq  1.$$
\end{itemize}

From the above recursive formulas, it immediately follows that the computation of  $A_v[d, b-1]$ comprises  $O(d  b)$ values
each of which can be computed recursively in time $O(\lambda b)$.  This together with  (\ref{eq-Amax}) implies that
$\MIS[v,b,0,t]$ can be computed  in time $O(d \lambda b^2)$.
\end{proof}

\medskip

\noindent We now consider the computation of $B_{v, \tau}[d,b,t].$ We prepare two technical lemmata. For this we will rely on
the definition of $B_{v, \tau}[\{i\}, j, k]$ as the restriction of $B_{v, \tau}[i, j, k]$ where the forest $F(v,i)$ is replaced by the 
single subtree $T(v_i).$ 

\begin{lemma} \label{lemma:claim1}
For each vertex $v$ with $d$ children, each $\tau=1,\ldots, \lambda,$ each $i=1, \dots, d,$ and each $j = 0, \ldots, \beta,$  we have
 \begin{equation} \label{eq:B-i1k0}
B_{v, \tau}[\{i\}, j, 0] = \max \left\{
\max_{\tau_i \in\{0,1,\dots, \lambda, \infty\}} \MIS[v_i, j, \tau_i, t(v_i)],
\max_{\substack{\tau_i \in\{\tau+1, \dots, \lambda\} \\ \MIS[v_i, j, \tau_i, t'(v_i)] > \MIS[v_i, j, \tau, t'(v_i)]}}
 \MIS[v_i, j, \tau_i, t'(v_i)] \right\}.
 \end{equation}
\end{lemma}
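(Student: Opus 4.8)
The plan is to prove Lemma~\ref{lemma:claim1} by establishing the two inequalities between $B_{v,\tau}[\{i\},j,0]$ and the right-hand side of~\eqref{eq:B-i1k0} separately. The whole argument rests on comparing, for a \emph{fixed} target set $S\subseteq V(T(v_i))$ with $c(S)\le j$, three influence processes run on the single subtree $T(v_i)$ that differ only in the threshold assigned to the root $v_i$: the process $P_t$ that uses $t(v_i)$ at every round, the process $P_{t'}$ that uses $t'(v_i)$ at every round, and the process $P_B$ that uses $t(v_i)$ during rounds $1,\dots,\tau$ and $t'(v_i)$ during rounds $\tau+1,\dots,\lambda$. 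By Definition~\ref{def:Bmax}, $P_B$ is exactly the process quantified by $B_{v,\tau}[\{i\},j,0]$, the condition $k=0$ imposing nothing further since its third requirement then reads $0\ge 0$. Write $f_t(S),f_B(S),f_{t'}(S)$ for the number of nodes of $T(v_i)$ influenced within $\lambda$ rounds under the three processes. A routine one-round induction shows that lowering the threshold of a single vertex can never shrink the influenced set; since at each round the threshold of $v_i$ used by $P_B$ lies between those used by $P_{t'}$ and $P_t$, this yields the monotonicity $f_t(S)\le f_B(S)\le f_{t'}(S)$ on which everything hinges.

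For the inequality $B_{v,\tau}[\{i\},j,0]\ge$ (RHS) I would treat the two terms of the maximum in turn. For the first term, a set $S$ optimal for $\MIS[v_i,j,\tau_i,t(v_i)]$ is feasible for $B_{v,\tau}[\{i\},j,0]$, and monotonicity gives $f_B(S)\ge f_t(S)=\MIS[v_i,j,\tau_i,t(v_i)]$; maximizing over $\tau_i\in\{0,\dots,\lambda,\spec\}$ settles it. For the second term, fix $\tau_i\in\{\tau+1,\dots,\lambda\}$ with $\MIS[v_i,j,\tau_i,t'(v_i)]>\MIS[v_i,j,\tau,t'(v_i)]$ and let $S$ be optimal for $\MIS[v_i,j,\tau_i,t'(v_i)]$. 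The key observation is that this \emph{strict} inequality forces $v_i$ to be first influenced strictly after round $\tau$ in $P_{t'}$: otherwise $S$ would witness $v_i$ influenced within round $\tau$, giving $\MIS[v_i,j,\tau,t'(v_i)]\ge f_{t'}(S)=\MIS[v_i,j,\tau_i,t'(v_i)]$, a contradiction. Since $v_i$ stays uninfluenced through round $\tau$ in $P_{t'}$, and $v_i$'s threshold is the only difference between $P_B$ and $P_{t'}$, the two processes evolve identically through round $\tau$ (the larger threshold of $P_B$ is irrelevant while $v_i$ is not yet influenced, and does not fire $v_i$ any earlier) and share $t'(v_i)$ from round $\tau+1$ on; hence they coincide and $f_B(S)=f_{t'}(S)=\MIS[v_i,j,\tau_i,t'(v_i)]$.

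For the reverse inequality I would start from $S$ attaining $B_{v,\tau}[\{i\},j,0]=f_B(S)$ and split on the first round $r$ at which $v_i$ is influenced in $P_B$. If $v_i\in S$, if $1\le r\le\tau$, or if $v_i$ is never influenced ($r=\spec$), then the reduction at round $\tau+1$ is immaterial ($v_i$ is already influenced by round $\tau$ under the original threshold, or never influenced at all), so $P_B$ and $P_t$ coincide and $f_B(S)=f_t(S)\le \MIS[v_i,j,r,t(v_i)]$, dominated by the first term. The remaining, and most delicate, case is $\tau+1\le r\le\lambda$: monotonicity gives $f_B(S)\le f_{t'}(S)$, and as $v_i$ is influenced within round $r$ in $P_B$ it is too in $P_{t'}$, so $f_{t'}(S)\le\MIS[v_i,j,r,t'(v_i)]$ with $r>\tau$. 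The main obstacle is to show that this value genuinely passes the filter $\MIS[v_i,j,r,t'(v_i)]>\MIS[v_i,j,\tau,t'(v_i)]$ and is therefore counted by the second term, rather than masking a cascade that $P_{t'}$ can trigger by firing $v_i$ early but that $P_B$ cannot realize (since under $P_B$ the vertex $v_i$ cannot be influenced before round $\tau+1$). I expect the boundary subcase $\MIS[v_i,j,r,t'(v_i)]=\MIS[v_i,j,\tau,t'(v_i)]$ to demand the most care, invoking the monotonicity of $\MIS[v_i,j,\cdot,t'(v_i)]$ in its round parameter (established in the proof of Lemma~\ref{lemma:MIS-ABC}) together with a precise comparison of when $v_i$ fires under $P_B$ versus $P_{t'}$, in order to reconcile what $P_B$ achieves with the first term.
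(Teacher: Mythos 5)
Your three-process setup ($P_t$, $P_B$, $P_{t'}$ coupled on a common target set, with $f_t(S)\le f_B(S)\le f_{t'}(S)$), your proof of the inequality $B_{v,\tau}[\{i\},j,0]\ge \mathrm{RHS}$, and your three easy cases of the reverse inequality ($v_i\in S$, $1\le r\le\tau$, $r=\spec$) are all sound, and they match the paper's argument step for step. The genuine gap is exactly the subcase you leave open: $r\in\{\tau+1,\dots,\lambda\}$ with $\MIS[v_i,j,r,t'(v_i)]=\MIS[v_i,j,\tau,t'(v_i)]$. You should know two things about it. First, the paper does not close it either: at the corresponding point (its third case) the paper simply asserts that conditions (i) and (ii) imply $\MIS[v_i,j,\tau',t'(v_i)]>\MIS[v_i,j,\tau,t'(v_i)]$, with no supporting argument. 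Second, the gap cannot be closed, because that implication --- and in fact the stated equality (\ref{eq:B-i1k0}) itself --- is false.

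Here is a counterexample. Let $v$ have a single child $v_1$, and let $T(v_1)$ be the path $v_1,u,x$ ($u$ the child of $v_1$, $x$ the child of $u$), with all tree arcs present in both directions and all weights equal to $1$, including $w(v,v_1)=1$. Set $t(v_1)=2$, $t(u)=t(x)=1$ (so $t'(v_1)=1$), $c(v_1)=2$, $c(u)=c(x)=1$, and take $\tau=1$, $\lambda=2$, $i=1$, $j=1$. For the left-hand side, take $S=\{u\}$: in the process defining $B_{v,1}[\{1\},1,0]$, node $x$ is influenced at round $1$ (while $v_1$ is not, since $w(u,v_1)=1<t(v_1)=2$), and at round $2=\tau+1$ the threshold of $v_1$ drops to $1$, so $v_1$ is influenced; hence $B_{v,1}[\{1\},1,0]=3$. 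On the right-hand side, the first maximum equals $2$: the entry $\tau_1=0$ is infeasible because $c(v_1)=2>j$; the entries $\tau_1\in\{1,2\}$ are infeasible because $v_1$'s only child can contribute influence at most $1<t(v_1)$; and the entry $\tau_1=\spec$ equals $|\{u,x\}|=2$. The second maximum is over an empty set: with threshold $t'(v_1)=1$ throughout, $S=\{u\}$ already influences all of $T(v_1)$, so $\MIS[v_1,1,1,t'(v_1)]=\MIS[v_1,1,2,t'(v_1)]=3$ and the strict filter fails for the only candidate $\tau_1=2$. Thus $\mathrm{RHS}=2<3=\mathrm{LHS}$.

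The failure mode is precisely your boundary subcase: the value $3$ is attained by the $B$-process with $v_1$ firing late (at round $\tau+1$), but the same value is also attainable under $P_{t'}$ with $v_1$ firing early, so the strict filter wrongly discards it. Note also that simply making the filter non-strict does not repair the formula: add a further child $z$ of $v_1$ with $t(z)=1$ and $w(v_1,z)=w(z,v_1)=1$; then $\MIS[v_1,1,2,t'(v_1)]=4$ (fire $v_1$ at round $1$, reach $z$ at round $2$) while $B_{v,1}[\{1\},1,0]=3$, since the $B$-process cannot fire $v_1$ before round $2$ and so cannot reach $z$ within $\lambda=2$ rounds. A correct recursion would need a quantity recording the value of the process in which $v_1$ is forced to fire at round $\max\{\tau+1,\rho\}$, where $\rho$ is its firing round under the reduced threshold; this is not expressible through the tabulated values $\MIS[v_1,j,\cdot,\cdot]$ alone. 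So your instinct that this subcase ``demands the most care'' was exactly right: it is the point at which both your attempt and the paper's own proof break down, and no argument along either line can complete it.
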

\begin{proof}
For notational convenience, let $R$ denote the right hand side of (\ref{eq:B-i1k0}).

By definition, if  a target set  $S \subseteq T(v_i)$ achieves the value $B_{v, \tau}[\{i\}, j, 0],$ then  $|\A{S, \lambda} \cap T(v_i)| = B_{v, \tau}[\{i\}, j, 0].$
\begin{itemize}
\item If there is a target set $S \subseteq T(v_i)$ that achieves $B_{v, \tau}[\{i\}, j, 0]$ and  $v_i \not \in \A{S, \lambda} \cap T(v_i)$ then
$|\A{S, \lambda} \cap T(v_i) | \leq \MIS[v_i, j, \infty, t(v_i)] \leq R$\,.
\item  If there is a target set $S \subseteq T(v_i)$ that achieves $B_{v, \tau}[\{i\}, j, 0]$ and $v_i \in \A{S, \tau}$ then
$|\A{S, \lambda} \cap T(v_i) | \leq \MIS[v_i, j, \tau, t(v_i)] \leq R$\,.
\item  If for every target set $S \subseteq T(v_i)$ that achieves $B_{v, \tau}[\{i\}, j, 0]$ it holds that: (i)  $v_i \not \in \A{S, \tau}$, and
(ii) $v_i \in \A{S, \tau'}$ for some $\tau+1 \leq \tau' \leq \lambda$, then we have that for any such $S$ it holds that {$|\A{S, \lambda} \cap T(v_i) | \leq \MIS[v_i, j, \tau', t'(v_i)].$} Moreover, by (i) and (ii) we also have that $\MIS[v_i, j, \tau', t'(v_i)] > \MIS[v_i, j, \tau, t'(v_i)].$ Hence,
$|\A{S, \lambda} \cap T(v_i)| \leq \MIS[v_i, j, \tau', t'(v_i)] \leq R.$
\end{itemize}
The above three cases show that
$B_{v, \tau}[\{i\}, j, 0] \leq R$\,.

\smallskip

To show the inequality in the other direction, we consider two cases according to which of the two $\max$ 
expressions in the right hand side of  (\ref{eq:B-i1k0}) gives $R$.
\begin{itemize}
\item Let $\tau_i \in \{0,1,\dots, \lambda, \infty\}$ be such that $\MIS[v_i, j, \tau_i, t(v_i)] = R.$ Let $S$ be a target set achieving $\MIS[v_i, j, \tau_i, t(v_i)].$
Then the influence diffusion process restricted to $T(v_i)$ and starting with $S$, in $\lambda$ rounds, in each of which the threshold of 
$v_i$ is $t(v_i)$, will influence some set $I$ of size $\MIS[v_i, j, \tau_i, t(v_i)].$ 
Clearly, starting the process with the same set $S$ and reducing the threshold of $v_i$ to $t'(v_i)$ from
round $\tau+1$ can only result in a set of influenced nodes which is a superset of $I.$ Hence, 
$R = \MIS[v_i, j, \tau_i, t(v_i)] \leq B_{v, \tau}[\{i\}, j, 0]$ for each $\tau_i =
0,1,\dots, \lambda, \infty.$
\item  Suppose that $R$ is achieved only by the second component of the $\max$ on the right hand side of (\ref{eq:B-i1k0}), i.e.,
$$\max_{\tau_i \in \{0,1, \dots, \lambda, \infty\}} \MIS[v_i, j, \tau_i, t(v_1)] < R = \MIS[v_i, j, \hat{\tau}, t'(v_i)]$$ for some
$\hat{\tau} \in \{\tau+1, \dots, \lambda\}$ such that  $\MIS[v_i, j, \hat{\tau}, t'(v_i)] > \MIS[v_i, j,  \tau, t'(v_i)].$
Because of the inequality $\MIS[v_i, j, \hat{\tau}, t'(v_i)] > \MIS[v_i, j,  \tau, t'(v_i)]$, there must exist a target set $S$ achieving
$\MIS[v_i, j, \hat{\tau}, t'(v_i)]$ such that, in the influence diffusion process in $T(v_i)$ started with $S$, the vertex $v_i$ is influenced
later than round $\tau.$ Therefore, this influence diffusion process exploits the reduction of the threshold of $v_i$ only after round $\tau,$ which implies that
\begin{eqnarray*}
R = \MIS[v_i, j, \hat{\tau}, t'(v_i)] &=& |\A{S, \lambda} \cap T(v_i) | \\
&\leq&
\max_{\substack{S' \subseteq T(v_i),~c(S') \leq j \\ t(v_i) = t'(v_i) \mbox{\footnotesize{ from round }}\tau+1}} |\A{S', \lambda} \cap T(v_i) | \\
&=& B_{v, \tau}[\{i\}, j, 0].
\end{eqnarray*}
\end{itemize}
In both cases we have $R \leq B_{v, \tau}[\{i\},j,0].$ This together with the previously shown inequality in the other direction completes the proof
of (\ref{eq:B-i1k0}). 
\end{proof}

\begin{lemma} \label{lemma:claim2}
For each vertex $v$ with $d$ children, each $\tau=1,\ldots, \lambda, $ each  $i=1, \dots, d,$ each $ j = 0, \ldots, \beta,$ and each $0 <k \leq w(v_i, v)$,  we have
\begin{equation} \label{eq:B-i1k}
B_{v, \tau}[\{i\}, j, k] =
\max_{\tau_i \in\{0,1, \dots, \tau-1\}}  \MIS[v_i, j, \tau_i, t(v_i)].
\end{equation}
\end{lemma}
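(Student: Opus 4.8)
The plan is to prove (\ref{eq:B-i1k}) by showing that both sides are the maximum of one and the same objective over one and the same family of admissible target sets. First I would exploit the fact that $B_{v, \tau}[\{i\}, j, k]$ is computed on the \emph{single} subtree $T(v_i)$: among the children $v_1, \dots, v_i$ of $v$, only $v_i$ belongs to $T(v_i)$, so for any target set $S \subseteq T(v_i)$ the sum appearing in Definition~\ref{def:Bmax}, namely $\sum_{u \in \{v_1,\dots,v_i\} \cap \A{S,\tau-1}} w(u,v)$, collapses to $w(v_i,v)$ if $v_i \in \A{S,\tau-1}$ and to $0$ otherwise. Since we are in the regime $0 < k \leq w(v_i,v)$, the constraint ``$\sum \geq k$'' is therefore \emph{equivalent} to the single requirement $v_i \in \A{S,\tau-1}$.

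Next I would observe that, under this requirement, the threshold reduction of $v_i$ to $t'(v_i)$ is irrelevant. By Definition~\ref{def:Bmax} the reduced threshold takes effect only from round $\tau+1$, but $v_i \in \A{S,\tau-1}$ forces $v_i$ to be influenced by round $\tau-1 < \tau+1$; hence in every feasible process $v_i$ is already active before its threshold would ever be lowered, and the original threshold $t(v_i)$ governs its activation. All other nodes of $T(v_i)$ keep their original thresholds in both formulations. Consequently $B_{v,\tau}[\{i\},j,k]$ equals the maximum of $|\A{S,\lambda} \cap T(v_i)|$ over all $S \subseteq T(v_i)$ with $c(S) \leq j$ and $v_i \in \A{S,\tau-1}$, all thresholds being the original ones.

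The heart of the argument is then to identify this family of target sets with the union, over $\tau_i \in \{0,1,\dots,\tau-1\}$, of the feasible sets for $\MIS[v_i, j, \tau_i, t(v_i)]$, and to note that the objective $|\A{S,\lambda} \cap T(v_i)|$ is the same in both. In one direction, a set feasible for $\MIS[v_i,j,0,t(v_i)]$ has $v_i \in S = \A{S,0} \subseteq \A{S,\tau-1}$, while a set feasible for $\MIS[v_i,j,\tau_i,t(v_i)]$ with $1 \leq \tau_i \leq \tau-1$ satisfies, by Definition~\ref{defi:MIS}, $v_i \in \A{S,\tau_i} \subseteq \A{S,\tau-1}$; either way $v_i$ is influenced by round $\tau-1$. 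Conversely, given any admissible $S$ with $v_i \in \A{S,\tau-1}$, I would take $\tau_i$ to be the \emph{first} round at which $v_i$ becomes influenced: then $\tau_i \in \{0,\dots,\tau-1\}$, and $S$ meets exactly the defining condition of $\MIS[v_i,j,\tau_i,t(v_i)]$ (membership $v_i \in S$ if $\tau_i = 0$, and $\sum_{u \in C(v_i) \cap \A{S,\tau_i-1}} w(u,v_i) \geq t(v_i)$ if $\tau_i \geq 1$). Since the admissible sets and the objectives coincide, so do the two maxima, which is (\ref{eq:B-i1k}).

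The step I expect to be the most delicate is this converse set identification, because $\MIS[v_i,j,\tau_i,t(v_i)]$ for $1 \leq \tau_i \leq \lambda$ does \emph{not} assert that $v_i$ is first influenced exactly at round $\tau_i$ (as the footnote after Definition~\ref{defi:MIS} emphasizes, it may become influenced earlier). I would resolve this by consistently associating to each admissible $S$ the first activation round of $v_i$, which makes the matching to Definition~\ref{defi:MIS}'s activation condition unambiguous and confines $\tau_i$ to $\{0,\dots,\tau-1\}$; one must also check that no admissible $S$ could require $\tau_i > \tau-1$, as that would contradict $v_i \in \A{S,\tau-1}$.
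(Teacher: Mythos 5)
Your proposal is correct and follows essentially the same route as the paper's proof: both arguments hinge on observing that $0 < k \leq w(v_i,v)$ makes the constraint equivalent to $v_i \in \A{S,\tau-1}$, that the threshold reduction at round $\tau+1$ is therefore never exercised, and that taking the first activation round $\tau_i \leq \tau-1$ of $v_i$ matches each admissible set to the defining condition of $\MIS[v_i,j,\tau_i,t(v_i)]$. Your presentation as an identity of feasible families (rather than two separate inequalities via optimal sets, as in the paper) is only a cosmetic difference, and your explicit handling of the fact that $\MIS$ does not pin down the exact first activation round is a welcome clarification of a point the paper treats implicitly.
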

\begin{proof}
Let set  $S \subseteq T(v_i)$ achieve the value $B_{v, \tau}[\{i\}, j, k],$ that is,  $|\A{S, \lambda} \cap T(v_i)| = B_{v, \tau}[\{i\}, j, k].$
Since $k > 0$, it means that by time $\tau$ the only child of $v$, namely $v_i,$\footnote{Recall that when we use $B_{v, \tau}[\{i\}, j, k]$,
we refer to the modified tree in which $F(v, d)$ has been replaced by $T(v_i)$. Hence $v$ now has only one child which, abusing notation, we continue to refer to as $v_i$ for the sake of keeping the correspondence with the original tree.}
exerts some influence on $v,$ hence $v_i$ has already
been influenced by time $\tau-1$.
Let $\tau'\in \{0,1,\ldots, \tau-1\}$ denote the minimum round at which $v_i$ gets influenced,
with  $t(v_i)$ being the threshold of $v_i$ at time $\tau'.$
Then $$B_{v, \tau}[\{i\},j,k]  \leq
\max_{\substack{S' \subseteq T(v_i) \\ c(S') \leq j \\ v_i \in \A{S', \tau'}}} |\A{S', \lambda} \cap T(v_i)| = \MIS[v_i, j, \tau', t(v_i)].$$

\medskip

\noindent
For the opposite inequality, let $\tau_i \in \{0,1,\ldots, \tau-1\}$ be such that $\MIS[v_i, j, \tau_i, k]$ achieves the maximum on the right hand side of
(\ref{eq:B-i1k}). Let $S$ be a target set achieving the maximum of $\MIS[v_i, j, \tau_i, k]$. Hence, $v_i \in \A{S, \tau_i} \subseteq
\A{S, \tau-1},$ since $\tau_i \leq \tau-1.$ Therefore, at time $\tau-1$ the influence from $v_i$ to $v$ is $w(v_i, v) \geq k.$
Notice that, since there is only one child of $v,$ namely $v_i$, the condition $\sum_{u \in C(v) \cap \A{S, \tau-1}} w(u,v) \geq k > 0$
is equivalent to requiring $v_i \in \A{S, \tau-1}.$
This implies
$$\MIS[v_i, j, \tau_i, k] \leq
\max_{\substack{S' \subseteq T(v_i) \\ c(S') \leq j \\ v_i \in \A{S', \tau-1}}} | \A{S', \lambda} \cap T(v_i)| = B_{v, \tau}[\{i\},j,k]$$
which provides the desired inequality and completes the proof of (\ref{eq:B-i1k}).
\end{proof}

\begin{lemma}\label{lemma2}
For each vertex $v$, each $b=0,1,\ldots, \beta$, each $t\in\{t(v),t'(v)\}$, and each $\tau=1,\ldots, \lambda$, it is possible to
 compute $B_{v,\tau}[d,b,t]$
 recursively in time $O(d\lambda b^2 t),$ where $d$ is the number of children of $v$.
\end{lemma}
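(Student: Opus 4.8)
The plan is to compute $B_{v,\tau}[d,b,t]$ by a secondary dynamic program that processes the children of $v$ one at a time, i.e., a recursion on the index $i$ of the forest $F(v,i)$, using the single-subtree quantities $B_{v,\tau}[\{i\},\cdot,\cdot]$ supplied by Lemmata~\ref{lemma:claim1} and \ref{lemma:claim2} as building blocks. The key structural fact I would exploit is that, in the setting defining $B_{v,\tau}$, the vertex $v$ itself does not belong to $F(v,i)$ and its influence on the children is accounted for solely through the reduction of the thresholds $t(v_\ell)\mapsto t'(v_\ell)$ at round $\tau+1$; consequently no influence is ever transmitted directly between two distinct subtrees $T(v_\ell)$ and $T(v_m)$. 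Hence the number of influenced nodes in $F(v,i)$ is the sum of the numbers influenced in $F(v,i-1)$ and in $T(v_i)$, the budget $j$ splits additively between the two parts, and the only coupling is the requirement $\sum_{u\in\{v_1,\dots,v_i\}\cap \A{S,\tau-1}} w(u,v)\ge k$.

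I would first record the base case $B_{v,\tau}[1,j,k]=B_{v,\tau}[\{1\},j,k]$, which holds since $F(v,1)=T(v_1)$. For the inductive step I would write each feasible target set for $F(v,i)$ as $S=S'\cup S_i$ with $S'\subseteq F(v,i-1)$ and $S_i\subseteq T(v_i)$, and observe that the contribution of $v_i$ to the coupling sum is \emph{binary}: it equals $w(v_i,v)$ if $v_i\in\A{S,\tau-1}$, and $0$ otherwise. This yields the recurrence
\begin{equation*}
\begin{split}
B_{v,\tau}[i,j,k]=\max_{0\le a\le j}\max\bigl\{\,&B_{v,\tau}[i-1,a,k]+B_{v,\tau}[\{i\},j-a,0],\\
&B_{v,\tau}[i-1,a,(k-w(v_i,v))^+]+B_{v,\tau}[\{i\},j-a,k_i]\,\bigr\},
\end{split}
\end{equation*}
where $k_i=\min\{k,w(v_i,v)\}$ and $x^+=\max\{x,0\}$: the first alternative charges none of the demand $k$ to $v_i$, leaving all of it to the first $i-1$ children, while the second charges $w(v_i,v)$ to $v_i$ (forcing $v_i\in\A{S,\tau-1}$, which is precisely the positive-credit case of Lemma~\ref{lemma:claim2}) and leaves the residual $(k-w(v_i,v))^+$ to $F(v,i-1)$. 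Correctness in both directions follows from the independence of the subtrees together with the fact, recorded in Lemma~\ref{lemma:claim2}, that $B_{v,\tau}[\{i\},j-a,k_i]$ is the same for every positive $k_i\le w(v_i,v)$, so that only the binary choice ``charge $v_i$ or not'' matters.

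Finally I would bound the running time. By Lemmata~\ref{lemma:claim1} and \ref{lemma:claim2}, all single-subtree values $B_{v,\tau}[\{i\},j,0]$ and $B_{v,\tau}[\{i\},j,k]$ (the latter independent of the particular positive $k$) are obtained, for all $i\le d$ and $j\le b$, by maximising over $O(\lambda)$ already-known $\MIS$-entries, for a total of $O(d\,b\,\lambda)$ time. The table $B_{v,\tau}[i,j,k]$ has $O(d\,b\,t)$ entries, and each is filled by the displayed recurrence through a maximisation over the $O(b)$ budget splits $a$, giving $O(d\,b^2\,t)$ time; adding the two contributions yields $O(d\,b\,\lambda+d\,b^2\,t)=O(d\,\lambda\,b^2\,t)$, and $B_{v,\tau}[d,b,t]$ is the sought value. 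The main obstacle I anticipate is the careful bookkeeping of the coupling constraint, namely verifying that restricting $v_i$'s contribution to the two values $0$ and $w(v_i,v)$ loses no optimal solution and that the residual demand $(k-w(v_i,v))^+$ is correctly propagated to $F(v,i-1)$; the remaining additive decomposition is routine.
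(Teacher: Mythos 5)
Your proposal is correct and follows essentially the same route as the paper's proof: a secondary dynamic program over the children indexed by $i$, splitting the budget $j$ into $a$ and $j-a$, and for each child a binary choice between charging none of the demand $k$ to $v_i$ (the paper's case \textsc{I}, using $B_{v,\tau}[\{i\},j-a,0]$ from Lemma~\ref{lemma:claim1}) or forcing $v_i$ to be influenced before round $\tau$ (the paper's case \textsc{II}, using Lemma~\ref{lemma:claim2} with residual demand $\max\{k-w(v_i,v),0\}$ passed to $F(v,i-1)$). Your use of $k_i=\min\{k,w(v_i,v)\}$ in place of the paper's $w(v_i,v)$ is equivalent by the invariance you cite, and your slightly sharper bookkeeping (precomputing the single-subtree values once) still lands within the claimed $O(d\lambda b^2 t)$ bound.
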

\begin{proof}
We can  compute $B_{v,\tau}[d,b,t]$ by recursively computing the values of $B_{v,\tau}[i,j,k]$ for each $i=1,2,\ldots,d,$ each $j=0,1,\ldots,b,$ and each $k=0,1,\ldots,t,$ as follows.

\medskip

\noindent
Let $i=1$. We split this case into three subcases according to the value of $k.$

For $k = 0$ we have $B_{v, \tau}[1, j, 0] = B_{v, \tau}[\{1\}, j, 0],$ hence by Lemma \ref{lemma:claim1}, we have
\begin{equation} \label{eq:B-1-0}
B_{v, \tau}[1, j, 0] = \max \left\{
\max_{\tau_1 \in\{0,1,\dots, \lambda, \infty\}} \MIS[v_1, j, \tau_1, t(v_1)],
\max_{\substack{\tau_1 \in\{\tau+1, \dots, \lambda\} \\ \MIS[v_1, j, \tau_1, t'(v_1)] > \MIS[v_1, j, \tau, t'(v_1)]}}
 \MIS[v_1, j, \tau_1, t'(v_1)] \right\}.
 \end{equation}

\bigskip

For $0 < k \leq w(v_1,v)$ we have $B_{v, \tau}[1, j, k] = B_{v, \tau}[\{1\}, j, k],$ hence by Lemma \ref{lemma:claim2}, we have
\begin{equation} \label{eq:B-1-k}
B_{v, \tau}[1, j, k] =
\max_{\tau_1 \in\{0,1, \dots, \tau-1\}}  \MIS[v_1, j, \tau_1, t(v_1)].
\end{equation}

\bigskip

Finally, if $k>w(v_1,v)$, then clearly $B_{v, \tau}[1, j, k] =-\infty$.

\bigskip

\noindent
Let $i \in \{2,\dots, d\}$. In order to compute $B_{v, \tau}[i, j, k],$ proceeding as in Lemma \ref{lemma1},
we consider all possible ways of partitioning the budget  $j$ into two values $a$ and $j-a$.
The budget $a$ is used in $F(v,i-1)$, while the remaining budget $j-a$ is assigned to $T(v_i)$.
Moreover, in order to ensure that
\begin{equation} \label{eq:cond-k}
\sum_{\substack{\ell \in\{1, \dots, i\} \\ v_i \in  \A{S,\tau - 1}}} w(v_{\ell},v) \geq k,
\end{equation}
 there are two possibilities to consider:

\begin{description}
	\item[{\textsc{I}})] $\displaystyle{\sum_{\substack{\ell \in\{1, \dots, i-1\} \\ v_i \in  \A{S,\tau - 1}}} w(v_{\ell},v) \geq k}$, i.e., the condition on the influence
	brought to $v$ from $v_1, \dots, v_i$ at time $\tau-1$ is already satisfied by $v_1, \dots, v_{i-1}.$
	In this case we have no constraint on whether and when $v_i$ is influenced, and we can use a reduced threshold from round $\tau+1$;
	\item[{\textsc{II}})] Otherwise, $v_i$ has to contribute to condition (\ref{eq:cond-k}). Hence, $v_i$ has to be influenced before round $\tau$ and  cannot use the reduced threshold.
\end{description}

Therefore, for $i> 1$ and for each $0 \leq j \leq \beta$ and $0 \leq k \leq t,$ we can compute
$B_{v,\tau}[i,j,k]$ using the following formula:
\begin{equation} \label{eqBmax-tot}
B_{v,\tau}[i,j,k]=\max \Big\{ B_{v,\tau}^{\textsc{i}}[i,j,k], B_{v,\tau}^{\textsc{ii}}[i,j,k] \Big\}\,,
\end{equation}
where $B_{v,\tau}^{\textsc{i}}[i,j,k]$ and $B_{v,\tau}^{\textsc{ii}}[i,j,k]$ denote the corresponding optimal values of
the two restricted subproblems.

In the definition of $B_{v, \tau}[i,j,k]$ we assumed complete independence among 
the influence diffusion processes in the different  subtrees of $F(v,i),$ so it holds that 
$$B_{v,\tau}^{\textsc{i}}[i,j,k] = \max_{0 \leq a \leq j} \Big\{B_{v, \tau}[i-1, a, k] + B_{v, \tau}[\{i\}, j-a, 0]\Big\}$$
because the absence of a constraint on whether or not $v_i$ is influenced is the same as putting no constraint on the influence
of $v_i$ towards $v$.

Hence, by Lemma \ref{lemma:claim1} we have 
\begin{eqnarray}
B_{v,\tau}^{\textsc{i}}[i,j,k] &=& \max_{\substack{0 \leq a \leq j \\ \phantom{a}}} \left \{ B_{v,\tau}[i{-}1,a,k] +
\max \left\{
\max_{\substack{\tau_i \in \{0,1,\dots, \lambda, \infty\} \\ \phantom{\MIS[v_1, j, \tau_i, t'(v_1)] } \\ \phantom{MIS[v_1, j, \tau, t'(v_1)]}}}  \MIS[v_i, j-a, \tau_i, t(v_i)], ~~~~~~~~~~~~~~~~~~~~~ \right. \right.  \nonumber \\
& & \left. \left. ~~~~~~~~~~~~~~~~~~~~~~~~~~~~~~~~~~~~
\max_{\substack{\tau_i \in\{\tau+1, \dots, \lambda\} \\ \MIS[v_i, j-a, \tau_i, t'(v_i)] >\\ \MIS[v_i, j-a, \tau, t'(v_i)]}}
 \MIS[v_i, j-a, \tau_i, t'(v_i)] \right\} \label{eqBmax-a}
 \right\}.
\end{eqnarray}

Analogously, because of  the complete independence among 
the influence diffusion processes in the different subtrees of $F(v,i),$ assumed in the definition of $B_{v, \tau}[i,j,k],$
it holds that 
$$B_{v,\tau}^{\textsc{ii}}[i,j,k] = \max_{0 \leq a \leq j} \Big\{B_{v,\tau}[i{-}1,a,\max\{k-w(v_i, v),0\}] + 
B_{v,\tau}[\{i\},j-a,w(v_i, v)] \Big\}$$
since constraining $v_i$ to be influenced before time $\tau$ is the same as requiring that its influence towards $v$ is at least $w(v_i, v)$ before time $\tau.$
Hence, using Lemma \ref{lemma:claim2} we have

\begin{equation} \label{eqBmax-b}
B_{v,\tau}^{\textsc{ii}}[i,j,k] =
\max_{0 \leq a \leq j} \left \{ B_{v,\tau}[i{-}1,a,\max\{k-w(v_i, v),0\}] +
\max_{\tau_i \in \{0,1,\dots, \tau-1\}}  \MIS[v_i, j-a, \tau_i, t(v_i)] \right\}\,.
\end{equation}

From (\ref{eq:B-1-0})-(\ref{eqBmax-b}), 
it follows that the computation of $B_{v,\tau}[\cdot,\cdot,\cdot]$ comprises $O(d  b t)$ values and 
each one is computed recursively in time $O(\lambda b)$. Hence we are able to compute it
in time $O(d\lambda b^2 t)$.
\end{proof}

\medskip

\noindent We now consider  the computation of $\MIS[v,b,\spec,t]$.  

\begin{lemma}\label{lemma3}
For each vertex $v$, each $b=0,1,\ldots,\beta$, and each $t\in\{t(v),t'(v)\}$, it is possible to compute $\MIS[v,b,\spec,t]$  in time $O(d \lambda b^2),$ where $d$ is the number of children of $v$.
\end{lemma}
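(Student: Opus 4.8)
The plan is to use Proposition~\ref{prop:Cmax} to reduce the computation of $\MIS[v,b,\spec,t]$ to that of $C_v[d,b]$, and then to compute $C_v[d,b]$ by a budget-splitting recursion over the children of $v$, in close analogy with the treatment of $A_v$ in Lemma~\ref{lemma1}.

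First I would dispose of feasibility. Since enlarging a target set can only enlarge $\A{S,\lambda}$, the empty set is the best candidate for keeping $v$ uninfluenced; hence a target set of cost at most $b$ with $v\notin\A{S,\lambda}$ exists if and only if $v\notin\A{\emptyset,\lambda}$ in $T(v)$ (with the threshold of $v$ set to $t$), which is checked in linear time. When this fails, $\MIS[v,b,\spec,t]=-\infty$; otherwise Proposition~\ref{prop:Cmax} gives $\MIS[v,b,\spec,t]=C_v[d,b]$. I would stress that $C_v[d,b]$ is independent of $t$: in the forest $F(v,d)$ the vertex $v$ is absent, so it never influences any child and every child $v_i$ keeps its original threshold $t(v_i)$, the residual thresholds $t'(v_i)$ never being triggered.

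Next I would establish the recursion for $C_v[i,j]$. For the base case $F(v,1)=T(v_1)$, taking the best over all activation scenarios of $v_1$ under its original threshold yields
\begin{equation*}
C_v[1,j]=\max_{\tau_1\in\{0,1,\ldots,\lambda,\spec\}}\MIS[v_1,j,\tau_1,t(v_1)].
\end{equation*}
For $i>1$, the absence of $v$ means that no influence can pass between $T(v_i)$ and $F(v,i-1)$, so the process decomposes into independent subprocesses, and splitting the budget $j$ into $a$ for $F(v,i-1)$ and $j-a$ for $T(v_i)$ gives
\begin{equation*}
C_v[i,j]=\max_{0\leq a\leq j}\Big\{C_v[i-1,a]+\max_{\tau_i\in\{0,1,\ldots,\lambda,\spec\}}\MIS[v_i,j-a,\tau_i,t(v_i)]\Big\}.
\end{equation*}
Both identities are proved exactly as equations~(\ref{eq:A-1}) and~(\ref{eq:A-4}) in Lemma~\ref{lemma1}; the only differences are that $v$ exerts no influence on its children (so only the original threshold $t(v_i)$ appears, and $\tau_i=\spec$ is admissible without any side condition) and that the contribution of $v$ itself is not added.

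Finally, for the running time, the table $C_v[\cdot,\cdot]$ has $O(db)$ entries, and each entry is obtained by maximizing over the $O(b)$ budget splits, evaluating for each split the inner quantity $\max_{\tau_i}\MIS[v_i,\cdot,\tau_i,t(v_i)]$ in $O(\lambda)$ time; hence each entry costs $O(\lambda b)$, and the whole computation of $C_v[d,b]$, together with the feasibility test and the read-off via Proposition~\ref{prop:Cmax}, takes $O(d\lambda b^2)$, as claimed. I expect the only delicate point to be the justification of the decomposition, namely that with $v$ uninfluenced the subtrees evolve independently and under the original (non-residual) thresholds, while the complexity bookkeeping is routine.
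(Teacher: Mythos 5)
Your proposal is correct and follows essentially the same route as the paper's proof: it reduces $\MIS[v,b,\spec,t]$ to $C_v[d,b]$ via Proposition~\ref{prop:Cmax} and computes $C_v[i,j]$ by the same base case and budget-splitting recursion over the children, with the same accounting of $O(db)$ table entries at cost $O(\lambda b)$ each. Your explicit feasibility test (checking whether $v\in\A{\emptyset,\lambda}$ to detect the $-\infty$ case, justified by monotonicity of the diffusion process in the target set) is a minor refinement that the paper's proof leaves implicit, not a different approach.
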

\begin{proof}
By Proposition \ref{prop:Cmax} it is enough to show that we  can compute $C_v[d,b]$ in the given time bound. We will do this
by  recursively computing the values  $C_v[i,j]$ for each $i=1,2,\ldots,d$ and for each $j=0,1,\ldots,b,$  as follows.
\\
For $i=1$,
we have that for any budget $j,$ it holds that
\begin{eqnarray}
C_v[1,j] &=&  \max_{\substack{S \subseteq T(v_1) \\ c(S) \leq j}} |\A{S, \lambda} \cap T(v_1)| \\
&=& \max_{\tau_1 \in \{0, 1, \dots, \lambda, \infty\}} \max_{\substack{S \subseteq T(v_1) \\ c(S) \leq j \\ v_1 \in \A{S, \tau_1}}} |\A{S, \lambda} \cap T(v_1)| \\
&=& \max_{\tau_1 \in \{0, 1, \dots, \lambda, \infty\}} \MIS[v_1, j, \tau_1, t(v_1)]
\end{eqnarray}
where the first equality holds because in this case $v$, whose contribution to the state of $v_1$ should be ignored, can only be influenced
by $v_1$ itself, hence in order to get $C_v[1,j]$ it is enough to consider only the vertices influenced in $T(v_1)$. The remaining equalities are obtained by
standard algebraic manipulation.

\medskip

\noindent
Now let $i>1$. For the sake of conciseness, we will abuse our definition and use weight $0$ to indicate that the
influence of $v$ on its children is to be neglected. Then we can write

\begin{eqnarray}
C_v[i,j] &=& \max_{\substack{S \subseteq F(v,i) \\ c(S) \leq j \\ w(v, v_k) = 0, k=1,\dots, i}} |\A{S, \lambda} \cap F(v,i)| \\
&=& \max_{\substack{S \subseteq F(v,i) \\ c(S) \leq j \\ w(v, v_k) = 0, k=1,\dots, i}} \left\{ |\A{S \cap F(v, i-1), \lambda} \cap F(v,i-1)| \right. \nonumber\\
& & ~~~~~~~~~~~~~~~~~~~~~~~~~~~~~~~~~~+ \left.
         |\A{S \cap T(v_i), \lambda} \cap T(v_i)|  \right\} \\
&=& \max_{0 \leq a \leq j} \left\{ \max_{\substack{S_1 \subseteq F(v,i-1) \\ c(S_1) \leq a \\ w(v, v_k) = 0, k=1,\dots, i-1}}
    |\A{S_1 \cap F(v, i-1), \lambda} \cap F(v,i-1)| \right.  \nonumber\\
& & ~~~~~~~~~~~~\left.  +
         \max_{\substack{S_2 \subseteq T(v_i) \\ c(S_2) \leq j-a \\ w(v, v_i) = 0}}   |\A{S_2 \cap T(v_i), \lambda} \cap T(v_i)| \right\} \\
 &=& \max_{0 \leq a \leq j} \left\{C_v[i-1, a] + \max_{\tau_i \in \{0, \dots, \lambda, \infty\}} \MIS[v_i, j-a, \tau_i, t_i] \right\}.
\end{eqnarray}

The last equality follows by the definition of $C_v[i-1,a]$ and since, in perfect analogy with the proof of the case $i=1,$ we can show that
$$\max_{\substack{S_2 \subseteq T(v_i) \\ c(S_2) \leq j-a \\ w(v, v_i) = 0}}   |\A{S_2 \cap T(v_i), \lambda} \cap T(v_i)| =
\max_{\tau_i \in \{0, \dots, \lambda, \infty\}} \MIS[v_i, j-a, \tau_i, t_i].$$

\medskip

There are $O(db)$ values of $C_v[\cdot ,\cdot]$ and each one is computed recursively in time $O(\lambda b)$. Hence,
by (\ref{eq-cmaxmis}),   we are able to compute $\MIS[v,b,\spec,t]$ in time $O(d\lambda b^2)$.
\end{proof}

\medskip



Thanks to the four lemmata \ref{lemma:MIS-ABC}, \ref{lemma1}, \ref{lemma2}, and \ref{lemma3} above, and recalling that for each node $v \in V,$ $t(v) \leq W(v)+1$,  we have that for each node $v \in V,$  for each $b=0,1,\ldots,\beta$, for each $\tau=0,1,\ldots, \lambda, \spec$, and for $t \in \{t'(v),t(v)\}$, $\MIS[v,b,\tau,t]$ can be computed recursively in time $O(\lambda\beta^2d(v)W(v))$. Hence, the value
\begin{equation} \label{eq:MIS}
\max_{\tau \in \{0,1,\ldots,\lambda,\spec\}} \ \MIS[r,\beta,\tau,t(r)]
\end{equation}
 can be computed in time
 $$\sum_{v \in V} O(\lambda\beta^2d(v)W(v)) \times O(\lambda \beta)=O(\lambda^2\beta^3)\times\sum_{v \in V} O(d(v)W(v))=O(\Delta \lambda^2W\beta^3),$$
where $\Delta$ is the maximum node in-degree and $W=\max_{u \in V} \{W(u)\}$ is the sum of all edge weights.
Standard backtracking techniques can be used to compute the (optimal) target set of cost at most $\beta$ that
influences this maximum number of nodes in the same $O(\Delta \lambda^2W\beta^3),$ time.
This proves Theorem \ref{theorem-tree}.

\medskip


\medskip

In case the tree is unweighted, one can obtain more precise bounds on the
complexity of the algorithm. Indeed, 
 reasoning analogous to that performed before can be used to show that, on unweighted trees,
for each node $v \in V,$  for each $b=0,1,\ldots,\beta$, for each $\tau=0,1,\ldots, \lambda, \spec$, and for $t \in \{t(v)-1,t(v)\}$, the values $\MIS[v,b,\tau,t]$ can be computed recursively in time $O(\lambda\beta^2d(v)t(v))$. 
Also, on unweighted graphs, for each node $v \in V $ it holds that  $t(v) \leq d(v)+1$,
so the value in (\ref{eq:MIS})
 can be computed in time
 $$\sum_{v \in V} O(\lambda\beta^2d(v)^2) {\times} O(\lambda \beta)=O(\lambda^2\beta^3)\times\sum_{v \in V} O(d(v)^2)=O(\min\{n\Delta^2\lambda^2\beta^3,n^2\lambda^2\beta^3\}).$$
Hence we have the following Corollary to Theorem \ref{theorem-tree}.

\begin{corol}\label{cor-tree}
The {\sc $(\lambda, \beta)$-MIS} problem can be 
solved in  time $O(\min\{n\Delta^2\lambda^2\beta^3,$ $n^2\lambda^2\beta^3\})$ on  an unweighted tree with $n$ nodes and maximum degree $\Delta$.
\end{corol}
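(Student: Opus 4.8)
The plan is to obtain Corollary~\ref{cor-tree} as a direct specialization of the dynamic program of Theorem~\ref{theorem-tree} to the unweighted setting, leaving the recurrences of Lemmata~\ref{lemma:MIS-ABC}--\ref{lemma3} and their correctness proofs untouched and only refining the running-time accounting. First I would record the structural simplifications forced by $w\equiv 1$. Since $w(p(v),v)=1$, the residual threshold of (\ref{eq:decreasedth}) reduces to $t'(v)=\max\{t(v)-1,0\}$, so for every node only the two values $t(v)$ and $t(v)-1$ ever enter the table $\MIS[v,\cdot,\cdot,\cdot]$, exactly as in the weighted case. Moreover $W(v)=\sum_{u\in\nin{v}}w(u,v)=d(v)$, and the normalizing assumption $t(v)\le W(v)+1$ becomes $t(v)\le d(v)+1$.

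The heart of the argument is to re-examine where the factor $W(v)$ appears in the weighted per-node bound $O(\lambda\beta^2 d(v)W(v))$ and to check that it collapses to $d(v)$. The only place this factor enters is through the third index $k$ of the table $B_{v,\tau}[i,j,k]$ in Lemma~\ref{lemma2}, which ranges over $0,\dots,t$ and contributes the multiplicative factor $t$ in the bound $O(d\lambda b^2 t)$; the recurrences of Lemmata~\ref{lemma1} and \ref{lemma3} carry no such factor. Because in the unweighted case $t\in\{t(v)-1,t(v)\}$ and $t(v)\le d(v)+1$, this factor is $O(d(v))$, so each value $\MIS[v,b,\tau,t]$ is computable recursively in time $O(\lambda\beta^2 d(v)\cdot d(v))=O(\lambda\beta^2 d(v)^2)$, with no other change to the recurrences.

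Finally I would aggregate exactly as in the proof of Theorem~\ref{theorem-tree}. Computing (\ref{eq:MIS}) requires, for every node $v$, the $O(\lambda\beta)$ entries $\MIS[v,b,\tau,t]$ together with the bookkeeping to combine them across the values of $\tau$ and $b$, for a total of
\[
\sum_{v\in V} O(\lambda\beta^2 d(v)^2)\times O(\lambda\beta)
= O(\lambda^2\beta^3)\sum_{v\in V} d(v)^2 .
\]
The remaining step is to bound $\sum_{v} d(v)^2$ in two complementary ways: term-by-term via $d(v)\le\Delta$ gives $\sum_v d(v)^2\le n\Delta^2$, while $\sum_v d(v)^2\le\bigl(\sum_v d(v)\bigr)^2=O(n^2)$ (using $\sum_v d(v)=O(n)$ in a tree) gives the second bound. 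Taking the minimum yields the claimed running time $O(\min\{n\Delta^2\lambda^2\beta^3,\,n^2\lambda^2\beta^3\})$, and the optimal target set is recovered by the same backtracking used in Theorem~\ref{theorem-tree} within the same time bound. The only delicate point is the middle paragraph: one must verify that the range of the accumulator index $k$, which in general runs up to the incoming weight sum $W(v)$, is now governed by the threshold bound $t(v)\le d(v)+1$ rather than by any edge-weight magnitude, so that it contributes a factor $O(d(v))$ rather than $O(W(v))$.
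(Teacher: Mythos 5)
Your proposal is correct and follows essentially the same route as the paper: specialize the dynamic program of Theorem~\ref{theorem-tree} to $w\equiv 1$, observe that the per-node cost becomes $O(\lambda\beta^2 d(v)t(v))$ with $t(v)\le d(v)+1$ (the factor $W(v)$ entering only through the range of the index $k$ in Lemma~\ref{lemma2}), and then bound $\sum_{v}d(v)^2$ by $\min\{n\Delta^2, O(n^2)\}$. Your write-up is in fact more explicit than the paper's, which compresses this into ``reasoning analogous to that performed before,'' but the argument is the same.
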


\newcommand\lA[1]{p_A(#1)}

\newcommand\lS[1]{p_S(#1)}

\section{$(\lambda,\beta)$-MIS on Weighted Paths and Cycles}\label{path-cycles}

The results of Section \ref{sec-trees} obviously include paths.
However, for paths, we are able to significantly strengthen the result following from Theorem~\ref{theorem-tree}
by developing a polynomial time solution for the
{$(\lambda, \beta)$-\textsc{MIS}} problem on this class of graphs.
Let $P_n=(V,E)$  be a path on $n$ nodes $v_1,v_2,\ldots, v_{n}$,
and  edges  $(v_i,v_{i+1})$ and $(v_{i+1},v_i)$, for  $i= 1,\ldots, n-1$.

\begin{theorem}\label{teo-path}
The
{$(\lambda, \beta)$-\textsc{MIS}} problem can be solved in time $O(n^2\lambda)$
on a weighted path $P_n$.
\end{theorem}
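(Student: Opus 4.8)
The plan is to exploit the one-dimensional structure of $P_n$: every vertex has at most two in-neighbours, so for any target set $S$ the influenced set $\A{S,\lambda}$ is a disjoint union of maximal intervals of consecutive vertices, and each uninfluenced vertex separating two such intervals receives, from each adjacent influenced endpoint, strictly less than its threshold. Inside one interval the diffusion is driven by two families of ``waves'': a rightward wave emanating from each seed, which activates $v_{i+1}$ one round after $v_i$ whenever $w(v_i,v_{i+1})\ge t(v_{i+1})$, and a symmetric leftward wave. The first step is a structural lemma that characterises, between two consecutive seeds at positions $s<s'$, exactly which intermediate vertices are influenced within $\lambda$ rounds and in which round. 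The only subtlety is a middle vertex $v_j$ that neither single wave can activate on its own but for which $w(v_{j-1},v_j)+w(v_{j+1},v_j)\ge t(v_j)$; such a vertex is activated precisely when the rightward wave from $s$ and the leftward wave from $s'$ have both reached its two neighbours in time.

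To turn this into an algorithm I would first precompute two propagation functions, call them $\lA{\cdot}$ and $\lS{\cdot}$, where $\lA{i}$ records how far, and in which round, a rightward wave seeded at $v_i$ reaches before either breaking at a vertex whose threshold is not met by a single neighbour or exhausting the latency budget, and $\lS{i}$ is the symmetric quantity for leftward waves. Each array is computable in total time $O(n)$ by a single scan, and from them the activation round of any vertex under a one-sided wave is available in $O(1)$; the two-sided (confluence) contribution of a segment is then evaluated in time proportional to $\lambda$.

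The main dynamic program sweeps the path from left to right. The key device that avoids any dependence on $\beta$ (which may be as large as $\sum_v c(v)$) is to index the table not by remaining budget but by the number of influenced vertices: I would define $M[i,k]$ to be the minimum cost $c(S)$ of a target set $S\subseteq\{v_1,\dots,v_i\}$ such that exactly $k$ of the first $i$ vertices are influenced within $\lambda$ rounds and the induced interval-and-gap pattern is locally feasible (the boundary vertices separating activated intervals genuinely stay below threshold). Transitions either extend the current interval by one vertex or close it and open a new one after an uninfluenced gap; using $\lA{\cdot}$, $\lS{\cdot}$ and the confluence rule of the structural lemma, the round information needed to decide an extension costs $O(\lambda)$, and the latency bound is enforced through the round carried by the waves. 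Organised as an incremental sweep in which each of the $O(n^2)$ pairs (current right boundary, influenced count) is touched with $O(\lambda)$ work, the total running time is $O(n^2\lambda)$, and the optimum is $\max\{k : M[n,k]\le\beta\}$; standard backtracking recovers the target set in the same bound.

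The step I expect to be the main obstacle is the rigorous treatment of the two-sided activations under the time bound: proving that the interval-and-gap decomposition is without loss of optimality, and that the per-segment contribution — which vertices are influenced, in which round, and whether an uninfluenced gap remains in the middle — is computed correctly and in $O(\lambda)$ from $\lA{\cdot}$ and $\lS{\cdot}$. In particular one must verify that a vertex requiring both neighbours is counted only when both waves arrive in time, and that the local feasibility checks at the gaps between consecutive intervals are arranged so that the claimed $O(n^2\lambda)$ bound is actually met rather than an extra factor incurred by re-enumerating seed positions.
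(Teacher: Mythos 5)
Your structural picture of diffusion on a path is sound (one-sided waves advance one vertex per round or halt forever; between two consecutive seeds the influenced set is a prefix, plus a suffix, plus at most one confluence vertex activated by the sum of its two neighbours), and indexing the table by the number of influenced vertices rather than by the budget is exactly the right move --- it is also what the paper does. The genuine gap is in the dynamic program itself: the state $M[i,k]$ does not determine the information your transitions need. To decide whether the current interval can be ``extended by one vertex'' you must know at which round the rightmost influenced vertex $v_i$ became influenced (equivalently, its distance to the last seed), since the latency budget is consumed by the rightward wave; and to count vertices that will be influenced only by a leftward wave or a confluence emanating from a seed \emph{not yet placed}, the sweep must record that $v_i$ is being counted on a promise of future help from $v_{i+1}$, together with the deadline that promise must meet. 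Neither quantity is a function of $(i,k)$. If you recover this information by enumerating the position of the previous seed each time a segment is closed --- which is what evaluating your ``per-segment contribution'' amounts to --- each of the $O(n^2)$ states acquires $\Theta(n)$ transitions and the running time becomes $O(n^3)$, not $O(n^2\lambda)$. This is precisely the obstacle you flag in your last paragraph, and it is not a routine verification: it is the crux of the proof.

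The paper closes exactly this hole by augmenting the state. It computes $f(i,j,\tau,t)$, the minimum cost of a target set in the prefix $P_i$ influencing at least $j$ vertices, where $\tau\in\{0,1,\ldots,\lambda,\infty\}$ records the round by which $v_i$ is influenced and $t\in\{t(v_i),t'(v_i)\}$ records whether $v_i$ is to be activated with its original threshold or with the residual threshold $t'(v_i)=\max\{t(v_i)-w(v_{i+1},v_i),0\}$, the latter encoding the promise that the still-unprocessed neighbour $v_{i+1}$ will help $v_i$. With this state every transition is local, leftward waves and confluences are captured implicitly by chains of residual-threshold states, and the table has $O(n^2\lambda)$ entries filled in $O(n^2\lambda)$ total time. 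Your precomputed wave tables $p_A(\cdot)$, $p_S(\cdot)$ are not wrong, but they cannot substitute for this extra state; patching your sweep would force you to re-introduce the pair $(\tau,t)$, i.e., essentially the paper's DP. A smaller imprecision: your gap-feasibility condition (each gap vertex receives ``strictly less than its threshold'' from each influenced neighbour) ignores both the two-neighbour sum and timing --- a gap vertex may remain uninfluenced solely because the waves reach its neighbours too late, so feasibility of a gap is a statement about arrival rounds, not only about weights.
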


\begin{proof}{}
For $i\in \{1,\ldots, n-1\}$,
let us denote $t'(v_i) = \max\{t(v_i)-w(v_{i+1},v_i),0\}$, and let $t'(v_n) = t(v_n)$.
Let $ V(P_i)$ be the set of vertices of a path $P_i$.
For $i\in \{1,\ldots, n\}$,
 $j\in \{0,1,\ldots, n\}$,
$\tau\in\{0,1,\ldots, \lambda, \infty\}$, and
$t\in \{t(v_i), t'(v_i)\}$,
let
$f(i,j,\tau, t)$ denote the  minimum cost of a subset $S\subseteq V(P_i)$
such that if the influence diffusion process is run on $P_i$ with target set $S$, where the threshold of
each node $v_k$ with $k <i$ is $t(v_k)$, while the threshold of $v_i$ is set to  $t$,
then vertex $v_i$ is influenced within time $\tau$ and
at least $j$ vertices are influenced within time $\lambda$.
If such a set does not exist, we set $f(i,j,\tau,t ) = \infty$.
Furthermore, let $S(i,j,\tau, t)$ denote any set $S\subseteq V(P_i)$ attaining the value of
$f(i,j,\tau, t)$ (whenever this value is finite).

Notice that $f(n,k,\infty, t(v_n))$ equals the minimum cost of a subset $S\subseteq V(P_n)$
when the influence diffusion process is run on the input path with target set $S$ such that at least $k$ nodes are influenced
within $\lambda$ steps. Therefore, to solve the {$(\lambda, \beta)$-\textsc{MIS}} problem on $P_n$,
it suffices to find the maximum value of $k\in \{0,1,\ldots, n\}$ such that
$f(n,k,\infty, t(v_n))\le \beta$. An optimal solution will then be given by $S(n,k,\infty, t(v_n))$.

We now explain how all of the values of $f(i,j,\tau, t)$ and the corresponding sets
$S(i,j,\tau, t)$ can be computed in time $O(n^2\lambda)$.

First, observe that $f(i,j,\tau, t) = \infty$ if and only if $j>i$.
Indeed, if $j>i$ then the condition that at least $j$ elements out of $i$ are influenced within time $\lambda$ clearly cannot be fulfilled.
On the other hand, if $j\le i$, then $S = V(P_i)$ is a feasible solution for the problem defining $f(i,j,\tau, t)$.
Hence, in what follows, we will assume that $j\le i$ for every $4$-tuple $(i,j,\tau,t)$ under consideration.

We proceed in order of increasing values of $i$ and prove a sequence of claims.

\begin{claim}\label{eq:path-dynamics-i=1}
For $i = 1$, we have $$S(1,j,\tau, t) = \left\{
                                          \begin{array}{ll}
                                            \emptyset & \hbox{if $(j = 0 \textrm{ AND } \tau = \infty)  \textrm{ OR } (\tau \ge 1\textrm{ AND } t =

0)$} \\
                                            \{v_1\} & \hbox{otherwise,}
                                          \end{array}
                                        \right.$$
and
$f(1,j,\tau, t) = c(S(i,j,\tau, t)).$
\end{claim}

\noindent{\it Proof.}
For $j = 0$  and $\tau = \infty$, both constraints, the one specifying that $v_1$ should be influenced within time $\tau$, and the one
specifying that at least $j$ vertices become influenced within time $\lambda$, are vacuous. Therefore
$S=\emptyset$ is an optimal solution in this case.
If $\tau\ge 1$ and $t = 0$, then $v_1$ will
become influenced at time $1$ (which is not more than $\tau$), which also implies that
the constraint $|\Active[S,\lambda]\cap\{v_1\}|\ge j$ will
be satisfied for any $j\in \{0,1\}$ independently of $S$, which implies that $S=\emptyset$ is optimal. Suppose now that
($\tau \le \lambda$ or $j =
1$) and ($\tau = 0$ or $t >0$).
It suffices to show that the empty set is not a feasible solution.
Suppose by way of contradiction that it is. Then $\tau\ge 1$ and consequently $t>0$,
which implies that vertex $v_1$ will not become influenced. Consequently, neither
$\tau \le \lambda$ nor $j = 1$ are possible, a contradiction.\hfill$\blacktriangle$

\bigskip
Now let $i>1$, and suppose inductively that $f(i',j,\tau, t)$
and the corresponding target sets were already computed for all
$i'<i$ and all suitable values of $j$, $\tau$, and $t$.
In the next sequence of claims, we will show how to compute $S(i,j,\tau,t)$ and
$f(i,j,\tau, t)$ (for all suitable values of $j$, $\tau$, and $t$).
First we deal with the cases when $\tau = 0$.

\begin{claim}\label{eq:path-dynamics-0}
If $i>1$ and $\tau = 0$, then
$S(i,j,0,t) = S(i-1,\max\{j-1,0\}, \infty, t'(v_{i-1}))\cup\{v_i\}$
and
$f(i,j,0,t) = c(S(i,j,0,t))$.
\end{claim}

\noindent{\it Proof.}
The fact that $\tau = 0$ implies that $v_i$ must be taken in the corresponding target set,
that is, $v_i\in S(i,j,0, t)$.
It suffices to prove that $$f(i,j,0,t) = f(i-1,\max\{j-1,0\}, \infty, t'(v_{i-1}))+c(v_i)\,.$$

Let $S' = S(i-1,\max\{j-1,0\}, \infty, t'(v_{i-1}))$.
To show the  inequality ``$\le$'',
it suffices to argue that
when running the influence diffusion process in $P_i$ with target set $S = S'\cup \{v_i\}$, we have
$|\Active[S,\lambda]\cap V(P_i)|\ge j$. 
Indeed, assuming this property, we have that
$$f(i,j,0,t) \le c(S) = c(S')+c(v_i) = f(i-1,\max\{j-1,0\}, \infty, t'(v_{i-1}))+c(v_i)\,,$$
where the first inequality holds by definition of $f(i,j,0,t)$, the first equality holds by the definition of $S$,
and the last equality holds by the definition of $S'$. 
To justify the above claim, note that when running the influence diffusion process  in $P_{i-1}$ with target set $S'$,
at least $j-1$ vertices get influenced within $\lambda$ rounds. These vertices will also get influenced
within $\lambda$ rounds by the influence diffusion process in $P_i$ with target set $S$; in addition, $v_i$ will be
influenced since it belongs to the target set.

Similarly, to show the reverse inequality, ``$\ge$'',
it suffices to argue that when running the influence diffusion process  in $P_{i-1}$ with target set $S' = S(i,j,0,t)\setminus\{v_i\}$,
and with the threshold of $v_{i-1}$ set to $t'(v_{i-1})$,
at least $j-1$ vertices get influenced within $\lambda$ rounds. This follows from the observation that
for every $k$ with $1\le k\le i-1$, vertex $v_k$
gets influenced within $\lambda$ rounds in $P_i$ by the target set $S(i,j,0,t)$ if and only if
it gets influenced within $\lambda$ rounds in $P_{i-1}$ by the target set $S'$ with the modified threshold of $v_{i-1}$.
\hfill$\blacktriangle$

\medskip

Now, we handle the case when $\tau\neq 0$ and $t = 0$.

\begin{claim}\label{eq:path-dynamics-tau-t=0}
If $i>1$, $\tau\neq 0$, and $t = 0$,
then
$$S(i,j,\tau,0) = S(i-1,\max\{j-1,0\}, \infty, t')\,,$$
where $$t'=\left\{
             \begin{array}{ll}
               t'(v_{i-1}) & \hbox{if $\lambda>1$} \\
               t(v_{i-1}) & \hbox{otherwise,}
             \end{array}
           \right.$$
and $f(i,j,0,t) = c(S(i,j,0,t))$.
\end{claim}

\noindent{\it Proof.}
Since $t = 0$, vertex $v_i$ will become influenced at time $1$, no matter what the target set is.
If in addition $\lambda > 1$, then vertex $v_i$ can help to influence $v_{i-1}$ at times between $2$ and $\lambda$.
It suffices to prove that
$f(i,j,\tau,0)= f(i-1,\max\{j-1,0\}, \infty, t')\,.$
To show that $$f(i,j,\tau,0)\le f(i-1,\max\{j-1,0\}, \infty, t')\,,$$
note that in $P_{i-1}$, the influence diffusion process
with the target set $S(i-1,\max\{j-1,0\}, \infty, t')$
influences at least $j-1$ vertices
within $\lambda$ rounds.
These vertices, together with $v_i$, form a set of at least $j$ vertices
influenced within $\lambda$ rounds in $P_i$ by the same target set.
Conversely,
to show that $$f(i-1,\max\{j-1,0\}, \infty, t')\le f(i,j,\tau,0)\,,$$
observe that
the influence diffusion process in $P_i$ with target set
$S(i,j,\tau,0)$ influences at least $j-1$ vertices within $P_{i-1}$ within $\lambda$ rounds.
Moreover, if vertex $v_{i-1}$ is not in the target set but gets influenced within $\lambda$ rounds, then
this vertex will also get influenced when the influence diffusion process is run in $P_{i-1}$ with target set
$S(i,j,\tau,0)$ (which does not contain $v_i$, by optimality and the fact that costs are positive) and the threshold of $v_{i-1}$ set to $t'$.
This establishes the second inequality and proves the claim.\hfill$\blacktriangle$

\medskip
The remaining case is when $t > 0$, which is split into two further subcases, depending on whether $\tau$ is finite on not.

\begin{claim}\label{eq:path-dynamics-tau-t>0}
If $i>1$, $\tau\in \{1,\ldots, \lambda\}$, and $t > 0$,
then
$$f(i,j,\tau,t) = \left\{
                    \begin{array}{ll}
                      \min\{f(i,j,0, t), f(i-1,\max\{j-1,0\}, \tau-1, t(v_{i-1}))\} & \hbox{if $w(v_{i-1},v_i)\ge t$} \\
                      f(i,j,0, t) & \hbox{otherwise,}
                    \end{array}
                  \right.
\,$$
and
the set $S(i,j,\tau,t)$ is defined in the obvious way depending on where the minimum is attained.
\end{claim}

\noindent{\it Proof.}
Since $t > 0$, there are exactly two ways in which vertex $v_i$ can become influenced within time~$\tau$: either $v_i$ is
placed  in the target set, or it becomes influenced because $w(v_{i-1},v_i)\ge t$ and its unique neighbor, vertex $v_{i-1}$, becomes
influenced within time $\tau-1$.
This observation, together with arguments similar to those used in the proofs of previous claims, establishes the claim.~\hfill$\blacktriangle$

\medskip

Finally, for $\tau= \infty$ and $t>0$ we have the following.

\begin{claim}\label{eq:path-dynamics-tau-infty}
If $i>1$, $\tau= \infty$, and $t > 0$,
then
$$f(i,j,\infty, t) = \min\left\{\min_{0\le \tau'\le \lambda}f(i,j,\tau', t), f(i-1,j,\infty, t(v_{i-1}))\right\}\,,$$
and the set $S(i,j,\infty, t)$ is computed in the obvious way depending on where the minimum in the above expression is attained.
\end{claim}

\begin{sloppypar}
\noindent{\it Proof.}
Note that by definition of
$f(i,j,\tau, t)$, we have
$f(i,j,\infty, t) \le \min_{0\le \tau'\le \lambda}f(i,j,\tau', t)$.
Also, if $j\le i-1$, then running the influence diffusion process in $P_i$ with target set $S = S(i-1,j,\infty, t(v_{i-1}))$
results in at least $j$ influenced vertices (already within $V(P_{i-1})$),
showing that
$f(i,j,\infty, t) \le f(i-1,j,\infty, t(v_{i-1}))$.
This establishes that
$f(i,j,\infty, t) \le \min\left\{\min_{0\le \tau'\le \lambda}f(i,j,\tau', t), f(i-1,j,\infty, t(v_{i-1}))\right\}\,.$
\end{sloppypar}

For the converse direction, take an optimal solution $S = S(i,j,\infty, t)$, and consider the influence diffusion process in $P_i$ with target set
$S$
for $\lambda$ rounds.
Let $\tau_i$ be the time at which $v_i$ is influenced (with $\tau_i = \infty$ if $v_i$ is not influenced within $\lambda$ rounds).
If $\tau_i$ is finite, then
$f(i,j,\infty, t)= f(i,j,\tau_i, t)$, and
hence $\min_{0\le \tau'\le \lambda}f(i,j,\tau', t)\le f(i,j,\tau_i, t)= f(i,j,\infty, t)$.
If $\tau_i = \infty$, then $v_i$ is not influenced within time $\lambda$, which implies that
$S\subseteq V(P_{i-1})$, $j\le i-1$, and running the influence diffusion process in $P_{i-1}$ with target set $S$
for $\lambda$ rounds results in at least $j$ influenced vertices, showing that in this case
$f(i-1,j,\infty, t(v_{i-1}))\le f(i,j,\infty, t)$.
This proves the claim.
\hfill$\blacktriangle$

\medskip
To justify the time complexity of the resulting algorithm,  note that
there are $O(n^2\lambda)$ $4$-tuples $(i,j,\tau, t)$. Using the above formulas,
the corresponding optimal values of $f(i,j,\tau, t)$ and target sets $S(i,j,\tau, t)$ (in case of feasible problems)
can be computed in time $O(n^2\lambda)$.
\end{proof}


We conclude this section by extending our result for paths to cycles.
We denote by $C_n$ the cycle on $n\ge 3$ nodes that consists  of the path $P_n$ augmented with the edges $(v_1,v_n)$ and $(v_n,v_1)$.

\begin{theorem}\label{teo-cycle}
The {$(\lambda, \beta)$-\textsc{MIS}} problem can be solved in time $O(n^3\lambda)$ on a weighted cycle $C_n$.
\end{theorem}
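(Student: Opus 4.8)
The plan is to reduce the cycle to the path case already solved in Theorem~\ref{teo-path}. Observe that $C_n$ is exactly the path $P_n = v_1v_2\cdots v_n$ together with the two extra arcs $(v_1,v_n)$ and $(v_n,v_1)$. Hence the only feature preventing a direct application of the path algorithm is the mutual influence that $v_1$ and $v_n$ exert on each other through these two arcs; every other interaction in $C_n$ is already present in $P_n$. The difficulty is genuinely circular: the round at which $v_1$ becomes influenced depends, via $(v_n,v_1)$, on $v_n$, while the round at which $v_n$ becomes influenced depends, via $(v_1,v_n)$, on $v_1$.

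First I would break this circularity by guessing the round $\tau_1\in\{0,1,\ldots,\lambda,\infty\}$ at which $v_1$ is influenced in the cyclic process, with the conventions that $\tau_1=0$ means $v_1\in S$ and $\tau_1=\infty$ means $v_1\notin\A{S,\lambda}$. Fixing $\tau_1$ makes the contribution of $v_1$ to its two neighbours \emph{known}: since $v_1$ is influenced from round $\tau_1$ onward, it contributes $w(v_1,v_2)$ to $v_2$ and $w(v_1,v_n)$ to $v_n$ starting at round $\tau_1+1$. I would therefore delete $v_1$ from $C_n$, obtaining the path $v_2v_3\cdots v_n$, and fold the two extra arcs into the instance by decreasing the threshold of $v_2$ by $w(v_1,v_2)$ and the threshold of $v_n$ by $w(v_1,v_n)$, each reduction becoming effective only from round $\tau_1+1$. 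This is precisely the residual-threshold mechanism already used in Theorem~\ref{teo-path}, except that here the reduction is triggered at a fixed, known round rather than by a path-neighbour, so an appropriately adapted version of that dynamic program applies and, crucially, keeps track of the rounds at which the two endpoints $v_2$ and $v_n$ of the reduced path become influenced.

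Next I would enforce consistency of the guess. For $\tau_1=0$ I simply add $c(v_1)$ to the cost and $1$ to the count. For $1\le\tau_1\le\lambda$, the guess is valid exactly when the combined influence reaching $v_1$ from $v_2$ and $v_n$ first meets $t(v_1)$ at round $\tau_1-1$; note that the threshold reductions at $v_2$ and $v_n$ take effect only at round $\tau_1+1$, which is strictly after round $\tau_1-1$, so whether $v_2$ or $v_n$ is already influenced by round $\tau_1-1$ is governed by their \emph{original} thresholds, which is exactly the information the adapted path dynamic program records. For $\tau_1=\infty$ I would instead require that $v_1$ never accumulates influence $t(v_1)$ within $\lambda$ rounds. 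Among all guesses whose realized target set has cost at most $\beta$, I would return the one maximizing the number of influenced vertices (adding $v_1$ to the count whenever $\tau_1\neq\infty$).

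For the running time, there are $O(\lambda)$ choices of $\tau_1$, and for each the adapted path algorithm runs in the $O(n^2\lambda)$ time of Theorem~\ref{teo-path}; this yields $O(n^2\lambda^2)$, which is $O(n^3\lambda)$ since we may assume $\lambda\le n$ (on $n$ vertices the diffusion stabilizes within $n$ rounds, so capping $\lambda$ at $n$ changes no $\A{S,\lambda}$). The main obstacle I anticipate is exactly this endpoint coupling: once $v_1$ is removed, the resulting path is constrained at \emph{both} of its ends simultaneously, so the adapted dynamic program must track the activation rounds of $v_2$ and $v_n$ jointly, and I must verify that fixing $\tau_1$ genuinely severs the dependence cycle without either over-counting (activating $v_1$ prematurely with influence that is only available at a later round) or discarding an optimal solution.
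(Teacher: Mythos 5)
Your high-level plan---reduce the cycle to path instances by breaking the circular dependence at one point---is the same as the paper's, but your choice of \emph{what to condition on} leaves a genuine gap. You guess the round $\tau_1$ at which the fixed vertex $v_1$ becomes influenced, which forces you to rely on a path algorithm that (i) applies threshold reductions at \emph{both} endpoints $v_2$ and $v_n$ triggered at a prescribed round $\tau_1+1$ rather than in force from the start, and (ii) tracks the activation rounds of $v_2$ and $v_n$ \emph{jointly}, so that the validity of the guess (the influence reaching $v_1$ from $v_2$ and $v_n$ first meets $t(v_1)$ exactly at round $\tau_1-1$; or never, when $\tau_1=\infty$) can be checked. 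Theorem~\ref{teo-path} provides neither feature: its dynamic program $f(i,j,\tau,t)$ records the activation time of only the current last vertex of the subpath, and its residual thresholds hold for the whole process, with timing consistency built into recurrences designed for a one-sided interaction. You explicitly acknowledge this obstacle (``the adapted dynamic program must track the activation rounds of $v_2$ and $v_n$ jointly'') but never construct the adaptation---and that adaptation is precisely the technical core of the argument. It also undermines your complexity claim: a natural fix adds the activation round of $v_2$ (or at least its position relative to $\tau_1-1$) as a state dimension, or requires guessing it separately, so the ``$O(n^2\lambda)$ per guess'' is unsupported as stated, and every claim in the proof of Theorem~\ref{teo-path} would have to be re-proved for time-triggered, two-sided reductions.

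The paper sidesteps all of this by conditioning on \emph{membership} rather than on \emph{timing}: it enumerates the choice of a vertex $v_i$ with $c(v_i)\le\beta$ that belongs to the target set (plus the empty target set $S_0=\emptyset$). Once $v_i\in S$ is stipulated, $v_i$ is influenced at round $0$ unconditionally, so its influence on its two neighbours is available from round $1$ onward; one can therefore delete $v_i$, permanently lower the thresholds of $v_{i-1}$ and $v_{i+1}$ to $\max\{t(v_j)-w(v_i,v_j),0\}$, reduce the budget to $\beta-c(v_i)$, and invoke Theorem~\ref{teo-path} as a black box on the resulting $(n-1)$-vertex path: no consistency check and no endpoint coupling remain, because the only influence the deleted vertex ever exerts is already encoded in the permanent reductions. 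Since an optimal target set is either empty or contains some vertex, taking the best of the at most $n+1$ candidate solutions is exact, and the running time is $n\cdot O(n^2\lambda)=O(n^3\lambda)$, with no need for your $\lambda\le n$ capping argument. To rescue your timing-based decomposition you would have to design and prove correct a two-end, time-triggered variant of the path dynamic program from scratch; conditioning on target-set membership is what lets the reduction use the path result exactly as stated.
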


\begin{proof}
We describe how to reduce the problem to solving at most $n$ instances of the
{$(\lambda, \beta)$-\textsc{MIS}} problem on paths. The result will then follow from
Theorem~\ref{teo-path}.

We compute the set $I$ of all indices $i\in \{1,\ldots, n\}$ such that $c(v_i)\le \beta$.
We set $S_0 = \emptyset$, and
compute, for each $i\in I$, a target set $S_i$ with $v_i\in S_i$ such that
the number of nodes influenced within $\lambda$ rounds when running the influence diffusion process on $C_n$ with $S$,
over all sets $S$ containing $v_i$ and of total cost at most $\beta$,
is maximized for $S_i$. Once the sets $S_i$ for $i\in I$ are computed,
computing the number of influenced nodes within $\lambda$ rounds for each target set $S_i$, where $i\in I\cup\{0\}$,
can be used to determine an optimal solution.

For each $i\in I$, the problem of computing $S_i$ can be reduced to an instance of the
{$(\lambda, \beta)$-\textsc{MIS}} problem on the $(n-1)$-vertex path $C_n-\{v_i\}$, as follows.
Since we assume that $v_i\in S_i$, we reset the threshold of $v_{j}$ for $j\in \{i-1,i+1\}$ (indices modulo $n$)
to $t'(v_j) = \max\{t(v_j)-w(v_{i},v_j),0\}$.
We delete vertex $v_i$ from the graph (thus obtaining a path), reduce the budget to $\beta-c(v_i)$, and keep the latency bound $\lambda$ unchanged.
This way, it can be readily seen that we obtain a weighted path instance of the {\sc $(\lambda, \beta)$-MIS}
problem such that if $S$ is an optimal solution for this instance, then
$S_i = S\cup \{v_i\}$ has the desired property.

Together with Theorem~\ref{teo-path}, we obtain the claimed result.
\end{proof}

\section{Concluding Remarks and Open Problems}
We  considered the problems of selecting a {bounded} cost
subset of nodes in (classes of)  networks such that the influence they
spread in a {fixed}  number of rounds is the {highest} among
all subsets of the same bounded cost.
It is not difficult to see that our techniques can also solve closely related 
problems in  the same classes of graphs considered in this paper.
 For instance, one could fix a requirement $\alpha$ and ask for
the \emph{minimum} cost  target set such that after $\lambda$ rounds the number
of influenced nodes in the network is at least $\alpha$. Or, one could
fix a budget $\beta$ and a requirement $\alpha$, and ask about the \emph{minimum}
number $\lambda$ such that there exists a  target set of cost  at most $\beta$
that influences at least $\alpha$ nodes in the network within $\lambda$ rounds
(such a minimum $\lambda$ could also  be equal to  $\infty$, meaning that a target set 
with the desired properties does not exist).

{To the best of our knowledge, there are no results for the problems we considered in this paper
for ``less structured'' network models, like  small world graphs or  exponential random graphs and, in general,
 for models that better  capture  real-world properties of social networks. 
We plan to investigate these problems    in future work.}

Another interesting extension of our results would be  to consider the case in which
there is a numerical value $p(\cdot)$ associated with each node $v$ in  the network, measuring  the \emph{profit}  that an
advertiser, say,  would gain from  convincing $v$   to adopt a product.
This numerical value could be related, for instance, to the purchasing power (or the purchasing inclination)
of the individual.
In this scenario, one would be interested  in  finding a target set $S$ of bounded   cost
such that the sum of the profits associated with
influenced nodes, 
 computed as 
$$\sum_{v\in \Active[S,\lambda]}p(v),$$
  is the {highest} among
all subsets of the same bounded costs. We leave this problem open for future investigations.

\section*{Acknowledgments}
{The authors would like to thank the anonymous referees for their careful reading of the manuscript and for their many  valuable comments.}

\bigskip

\end{document}